\theoremstyle{plain}
\newtheorem{theorem}{Theorem}
\newtheorem{lemma}[theorem]{Lemma}
\newtheorem{proposition}[theorem]{Proposition}
 \newenvironment{proofof}[1]{{\bf {\em Proof of #1.}}}{\hfill \rule{2mm}{2mm} 
 }
\newlength{\widebarargwidth}
\newlength{\widebarargheight}
\newlength{\widebarargdepth}
\DeclareRobustCommand{\widebar}[1]{%
  \settowidth{\widebarargwidth}{\ensuremath{#1}}%
  \settoheight{\widebarargheight}{\ensuremath{#1}}%
  \settodepth{\widebarargdepth}{\ensuremath{#1}}%
  \addtolength{\widebarargwidth}{-0.3\widebarargheight}%
  \addtolength{\widebarargwidth}{-0.3\widebarargdepth}%
  \makebox[0pt][l]{\hspace{0.3\widebarargheight}%
    \hspace{0.3\widebarargdepth}%
    \addtolength{\widebarargheight}{0.3ex}%
    \rule[\widebarargheight]{0.95\widebarargwidth}{0.1ex}}%
  {#1}}
\long\def\@makecaption#1#2{
        \vskip 0.8ex
        \setbox\@tempboxa\hbox{\small {\bf #1:} #2}
        \parindent 1.5em  
        \dimen0=\hsize
        \advance\dimen0 by -3em
        \ifdim \wd\@tempboxa >\dimen0
                \hbox to \hsize{
                        \parindent 0em
                        \hfil 
                        \parbox{\dimen0}{\def\baselinestretch{0.96}\small
                                {\bf #1.} #2
                                } 
                        \hfil}
        \else \hbox to \hsize{\hfil \box\@tempboxa \hfil}
        \fi
        }
\newcommand{\normal}{\ensuremath{\mathcal{N}}}
\newcommand{\bigo}{\ensuremath{\mathcal{O}}}
\newcommand{\E}{\operatorname{\mathbb{E}}}
\newcommand{\eig}[1]{\ensuremath{\lambda_{#1}}}
\newcommand{\eigmax}{\ensuremath{\eig{\max}}}
\DeclareMathOperator{\Diag}{Diag}
\DeclareMathOperator{\trace}{trace}
\DeclareMathOperator{\rank}{rank}
\newcommand{\Zp}{\mathbb{Z}_{+}}
\newcommand{\R}{\mathbb{R}}
\newcommand{\Rp}{\mathbb{R}_{+}}
\newcommand{\indi}{\mathds{1}}
\newcommand{\imnb}{\mathbf{i}}
\newcommand{\twonorm}[1]{\left\|#1\right\|_{\ell_2}}
\newcommand{\vct}[1]{\bm{#1}}
\newcommand{\mtx}[1]{\bm{#1}}
\def\BC{\begin{center}}
\def\EC{\end{center}}
\def\BIT{\begin{itemize}}
\def\EIT{\end{itemize}}
\def\BET{\begin{enumerate}}
\def\EET{\end{enumerate}}
\def\BEQ{\begin{equation}}
\def\EEQ{\end{equation}}
\long\def\comment#1{}
\newcommand{\MSE}{\mathsf{MSE}}
\newcommand{\one}{\mathbf{1}}
\newcommand{\half}{\frac{1}{2}}
\newcommand{\symmpsd}{\mathbb{S}^{ d\times d}_{+}}
\renewcommand{\intercal}{{\bm \top}}
\newcommand{\BP}{\mathsf{BP}} 
\newcommand{\HQP}{\mathsf{HQP}} 
\begin{document}

\title{
{\bf{\LARGE{Decoding from Pooled Data: \\ Phase Transitions of Message Passing}}}
}

\author{Ahmed El Alaoui\thanks{Department of Electrical Engineering and Computer Sciences, UC Berkeley, CA.} ~~ 
Aaditya Ramdas$^{\star}$\thanks{Department of Statistics, UC Berkeley, CA.} \and
Florent Krzakala\thanks{Laboratoire de Physique Statistique, CNRS, PSL Universit\'es \& Ecole Normale Sup\'erieure, Sorbonne Universit\'es et Universit\'e Pierre \& Marie Curie, Paris, France.} ~~ 
Lenka Zdeborov\'{a}\thanks{Institut de Physique Th\'eorique, CNRS, CEA, Universit\'e Paris-Saclay, Gif-sur-Yvette, France.} ~~ 
Michael I. Jordan$^{\star \dagger}$}

\date{}
\maketitle

\vspace*{-.3in} 
\begin{abstract}
 We consider the problem of decoding a discrete signal of categorical variables from the observation of several histograms of pooled subsets of it. 
 We present an Approximate Message Passing (AMP) algorithm for recovering the signal in the \emph{random dense} setting where each observed histogram involves a random subset of entries of size proportional to $n$. We characterize the performance of the algorithm in the asymptotic regime where the number of observations $m$ tends to infinity proportionally to $n$, by deriving the corresponding State Evolution (SE) equations and studying their dynamics. We initiate the analysis of the multi-dimensional SE dynamics by proving their convergence to a fixed point, along with some further properties of the iterates. The analysis reveals sharp phase transition phenomena where the behavior of AMP changes from exact recovery to weak correlation with the signal as $m/n$ crosses a threshold. We derive formulae for the threshold in some special cases and show that they accurately match experimental behavior.
\end{abstract}

\section{Introduction}
Consider a discrete high-dimensional signal consisting of categorical variables, 
for example, nucleotides in a string of DNA or country of origin for a set of people.  
In many real-world settings, it is infeasible to observe the entire high-dimensional 
signal, for reasons of cost or privacy.  Instead, in a manner akin to compressed 
sensing, observations can be obtained in the form of ``histograms'' or ``frequency 
spectra''---pooled measurements counting the occurence of each category or type 
across subsets of the variables. 
Concretely, we investigate the so-called \emph{Histogram Query Problem} ($\HQP$): a database consisting of a population of $n$ individuals, where each individual belongs to one category among $d$, is queried. In each query, a subset of individuals is selected, and the histogram of their types, along with the individuals in that subset are revealed. Such a data acquisition model is common in applications such as the processing of genetic data, where DNA samples from multiple sources are pooled and analyzed together~\cite{sham2002dna}. 
This gives rise to the inferential problem of determining the category of every individual in the population. 
 The question of interest in this paper is to determine the minimal number of observations needed for recovery, and to ascertain whether this inferential problem can be solved 
in an efficient manner.

\subsection{The setting}
Let $\tau^*  : \{1,\cdots,n\} \mapsto \{1,\cdots,d\}$ be an assignment of $n$ variables to $d$ categories. We denote the queried subpopulations by $S_a \subset \{1,\cdots,n\}$, $1\le a \le m$. Given $m$ subsets $S_a$, the histogram of categories of the pooled subpopulation $S_a$ is denoted by $\vct{h}_a \in \Zp^{d}$, i.e., for all $1\le a \le m$,
\begin{equation}\label{histogram_combinatorial}
\vct{h}_a := \left(\left|{\tau}^{*-1}(1) \cap S_a \right|,\cdots, \left|{\tau}^{*-1}(d) \cap S_a \right| \right).
\end{equation}
We let $\vct{\pi} = \frac{1}{n} \left(\left|{\tau^*}^{-1}(1)\right|,\cdots,\left|{\tau^*}^{-1}(d)\right| \right)$ denote the vector of proportions of assigned values; i.e., the empirical distribution of categories. We place ourselves in a random dense regime in which the sets $\{S_a\}_{1\le a \le m}$ are independent draws of a random set $S$ where $\Pr(i \in S) = \alpha$ independently for each $i \in \{1,\ldots,n\}$, for some fixed $\alpha\in (0,1)$. Meaning, at each query, the size of the pool is proportional to the size of the population: $\E[|S|] = \alpha n$.

Here we adopt a linear-algebraic formulation which will be more convenient for the presentation of the algorithm. We can represent the map $\tau^*$, which we refer to as \emph{the planted solution}, as a set of vectors $\vct{x}^*_i = \vct{e}_{\tau^*(i)} \in \R^d$, for $1\le i \le n$. Let $\mtx{A}\in \R^{m \times n}$ represent the sensing matrix: $A_{ai} = \indi\{i\in S_a\}$, for all $1\le i \le n, 1\le a \le m$.
The histogram equations~\eqref{histogram_combinatorial} can be written in the form of a linear system of $m$ equations:
 \begin{equation}\label{histogram_linear}
 \vct{h}_{a} = \sum_{i=1}^n A_{ai}\vct{x}^*_i,~\quad a \in \{1,\cdots,m\}.
 \end{equation}
Our goal can thus be rephrased as that of inverting the linear system~\eqref{histogram_linear}. Note that the problem becomes trivial if $m = n$, since the square random matrix $\mtx{A}$ will be invertible with high probability. However, as we review in the next section, a detailed information-theoretic analysis of the problem shows that the planted solution is uniquely determined by the above linear system for $m = \gamma \frac{n}{\log n}$, $\gamma >0$. In this paper we study the algorithmic problem in the regime $m = \kappa n$, $\kappa <1$.     

\subsection{Prior work}  
The $\HQP$ has recently been considered in~\cite{wang2016data,alaoui2016decoding}. Its study was initiated in~\cite{wang2016data} in the two settings where the sets $\{S_a\}$ are deterministic and random. We review the information-theoretic and algorithmic results known so far. 

\paragraph{Information-theoretic aspect} 
Under the condition that $\vct{\pi}$ is the uniform distribution, Wang~\emph{et al.}~\cite{wang2016data} showed a lower bound on the minimum number of queries $m$ for the problem to be well-posed, namely, if $m < \frac{\log d}{d-1}\frac{n}{\log n}$ then the set of collected histograms does not uniquely determine the planted solution $\tau^*$. Further, under the condition that $\alpha = \half$, they showed that $m > c_0 \frac{n}{\log n}$ with $c_0$ a constant independent of $d$, suffices to uniquely determine $\tau^*$. These results were later generalized and sharpened in~\cite{alaoui2016decoding}, where it was shown that for arbitrary $\vct{\pi}$ and $\alpha$, $m\in (\gamma_{\text{low}}\frac{n}{\log n}, \gamma_{\text{up}} \frac{n}{\log n})$ measurements are necessary and sufficient for $\tau^*$ to be unique, where $\gamma_{\text{low}} = \frac{H(\vct{\pi})}{d-1}$, and $\gamma_{\text{up}}$ is ``essentially" $2\gamma_{\text{low}}$ (see~\cite{alaoui2016decoding} for the precise formula), $H$ being the Shannon entropy function.    
    
\paragraph{Algorithmic aspect} In the deterministic setting, where one is allowed to design the sensing matrix $\mtx{A}$, i.e.\ choose the pools $S_a$ at each query, Wang~\emph{et al.}~\cite{wang2016data} provided a querying strategy that recovers $\tau^*$ provided that $m > c_1 \frac{n}{\log n}$, where $c_1$ is an absolute constant. Ignoring the dependence on $d$, this almost matches the information-theoretic limit. The random setting has not been treated so far, and is the subject of the present paper.    

\subsection{Contributions}
We present an Approximate Message Passing (AMP) algorithm for the \emph{random dense} setting, where each query involves a random subset of individuals of size proportional to $n$. We characterize the exact asymptotic behavior of the algorithm in the limit of large number of individuals $n$ and a proportionally large number of queries $m$, i.e.\ $m/n \to \kappa$. This is done by heuristically deriving the corresponding State Evolution (SE) equations corresponding to the AMP algorithm. Then, a rigorous analysis of the SE dynamics reveals a rich and interesting behavior; namely the existence of phase transition phenomena in the parameters $\kappa,d,\vct{\pi}$ of the problem, due to which the behavior of AMP changes radically, from exact recovery to  very weak correlation with the planted solution. We exactly locate these phase transitions in simple situations, such as the binary case $d=2$, the symmetric case $\vct{\pi} = (\frac{1}{d},\cdots,\frac{1}{d})$, and the general case under the condition that the SE iteration is initialized from a special point. The latter exhibits an intriguing phenomenon: the existence of not one, but an entire sequence of thresholds in the parameter $\kappa$ that rules the behavior of the SE dynamics. These thresholds correspond to sharp changes in the structure of the covariance matrix of the estimates output by AMP. We expect this phenomenon to be generic beyond the special initialization case studied here.
Beyond the precise characterization of the phase transition thresholds in these special cases, we initiate the study of State Evolution in a multivariate setting by proving the convergence of the full-dimensional SE iteration, when initialized from a ``far enough" point, to a fixed point, and show further properties of the iterate sequence. This paper is intended to be a sequel to the information-theoretic study conducted in ~\cite{alaoui2016decoding}.  

 \section{Approximate Message Passing and State Evolution}
In this section we present the Approximate Message Passing (AMP) algorithm and the corresponding State Evolution (SE) equations.     
 
\subsection{The AMP algorithm} \label{subsec:AMP}
The AMP algorithm~\cite{donoho2009message}, known as the Thouless-Anderson-Palmer equations in the statistical physics literature~\cite{thouless1977solution}, can be derived from Belief Propagation (BP) on the factor graph modeling the recovery problem. The latter is a bipartite graph of $n + m$ vertices. The variables $\{\vct{x}_i: 1\le i \le n\}$ constitute one side of the bipartition, and the observations $\{\vct{h}_a: 1\le a \le m\}$ constitute the other side. The adjacency structure is encoded in the sensing matrix $\mtx{A}$. Endowing each edge $(i,a)$ with two messages $\vct{m}_{i\to a}, \vct{m}_{a \to i} \in \Delta^{d-1}$, $\Delta^{d-1}$ being the probability simplex, one can write the self-consistency equations for the messages at each node by enforcing the histogram constraints at each observation (or check) node while treating the incoming messages as probabilistically independent in the marginalization operation. The iterative version of these self-consistency equations is the BP algorithm. BP is further simplified to AMP by exploiting the fact that the factor graph is random and dense, i.e.\ one only needs to track the average of the messages incoming to each node. This reduces the number of passed messages from $m\times n$ to $m+n$. For the present $d$-variate problem, the algorithm we present is a special case of Hybrid-GAMP of~\cite{rangan2012hybrid}. We let $\bar{\vct{h}}_a = (\vct{h}_a - \alpha n \vct{\pi})/\sqrt{n}$ and $\widebar{\mtx{A}} = (\mtx{A} - \alpha \one_m\one_n^\intercal)/\sqrt{n}$ be the centered and rescaled data, and assume that the parameters $\alpha$ and $\vct{\pi}$ are known to the algorithm. 
The AMP algorithm reads as follows:  At iteration $t = 1, 2, \dots$, we update the check nodes $a=1,\cdots,m$ as
\begin{align*}
\vct{\omega}_{a}^t  ~~&= \sum_{j \in \partial a} \widebar{A}_{aj} \hat{\vct{x}}_{j}^t - \mtx{V}_{a}^t\left(\mtx{V}_{a}^{t-1}\right)^{-1}(\bar{\vct{h}}_a - \vct{\omega}_a^{t-1}),\\
\mtx{V}_{a}^t ~~&= \sum_{j \in \partial a} \widebar{A}_{aj}^2\mtx{B}_{j}^t, 
\end{align*}
and then update the variable nodes $i = 1,\cdots, n$ as
\begin{align*}
\vct{z}_{i}^t ~~&= \hat{\vct{x}}_{i}^t + \mtx{\Sigma}_{i}^t \cdot \sum_{b \in \partial i} \widebar{A}_{bi} \left(\mtx{V}_{b}^t\right)^{-1} (\bar{\vct{h}}_b - \vct{\omega}^t_{b}), \\
\mtx{\Sigma}_{i}^t ~~&= \bigg(\sum_{b \in \partial i } \widebar{A}_{bi}^2 \left(\mtx{V}_{b}^t\right)^{-1}\bigg)^{-1}, \\
\hat{\vct{x}}_{i}^{t+1} &= \vct{\eta}(\vct{z}_{i}^t, \mtx{\Sigma}_{i}^t),\\ 
\mtx{B}_{i}^{t+1} &= \Diag(\hat{\vct{x}}_{i}^{t+1}) - \hat{\vct{x}}_{i}^{t+1} \cdot \hat{\vct{x}}_{i}^{t+1\intercal},
\end{align*}
with
\begin{equation}\label{eta_thresholding}
\vct{\eta}(\vct{z},\mtx{\Sigma}) : = \sum_{r=1}^d \pi_r \vct{e}_r\frac{ e^{-\half (\vct{z} - \vct{e}_r)^\intercal\mtx{\Sigma}^{-1}(\vct{z} - \vct{e}_r)}}{Z(\vct{z},\mtx{\Sigma})}  \in \R^d,
\end{equation}
where $Z(\vct{z},\mtx{\Sigma}) = \sum_{r=1}^d \pi_r e^{-\half (\vct{z} - \vct{e}_r)^\intercal\mtx{\Sigma}^{-1}(\vct{z} - \vct{e}_r)}$ is a normalization factor so that the entries of $\vct{\eta}$ sum to one. The map $\vct{\eta}$ plays the role of a ``thresholding function" with a matrix parameter $\mtx{\Sigma}$ that is adaptively tuned by the algorithm. One should compare this situation to the case of sparse estimation~\cite{donoho2009message} where the soft thresholding function is used. Here, the form taken by $\vct{\eta}$ is adapted to the structure of the signal we seek to recover. The variables $\vct{\omega}_{a}$ and $\mtx{V}_{a}$ represent estimates of the histogram $\vct{h}_a$ and their variances. The variables $\vct{z}_i$ and $\mtx{\Sigma}_i$ are estimators of the planted solution $\vct{x}_i^*$ and their variances before thresholding, while $\hat{\vct{x}}_i \in \Delta^{d-1}$ and $\mtx{B}_i$ are the posterior estimates of $\vct{x}^*_i$ and its variance, i.e., after thresholding. The algorithm can be initialized in a ``non-informative" way by setting $\hat{\vct{x}}_i^0 = \vct{\pi}, \mtx{B}_i^0 = \Diag(\vct{\pi}) - \vct{\pi} \vct{\pi}^\intercal$ for all $i = 1,\dots,n$, and $\vct{\omega}_a^{-1} = \vct{0}$ and $\mtx{V}_a^{-1} = \mtx{I}$ for all $a=1,\cdots,m$ for example.
 We defer the details of the derivation to Appendix~\ref{sxn:AMP}.

\subsection{State Evolution} 
State Evolution (SE)~\cite{donoho2009message,bayati2012universality}, known as the cavity method in statistical physics \cite{mezard1990spin}, allows us to exactly characterize the asymptotic behavior of AMP at each time step $t$, by tracking the evolution in time of the relevant \emph{order parameters} of the algorithm. More precisely, let 
\begin{align*}
\mtx{M}_{t,n} &:= \frac{1}{n} \sum_{i=1}^n \hat{\vct{x}}_i^{t} \vct{x}_i^{*\intercal},
 ~~~~ \text{and} ~~~~ \mtx{Q}_{t,n} := \frac{1}{n} \sum_{i=1}^n \hat{\vct{x}}_i^{t} \hat{\vct{x}}_i^{t\intercal}.
 \end{align*} 
The matrix $\mtx{M}_{t,n}$ tracks the average alignment of the estimates with the true solution, and $\mtx{Q}_{t,n}$ their average covariance structure.
The SE equations relate the values of these order parameters at $t+1$ to those at time $t$ in the limit $n \to \infty$, $m/n \to \kappa$. We let $\mtx{M}_{t}$ and $\mtx{Q}_{t}$ denote the respective limits of $\mtx{M}_{t,n}$ and $\mtx{Q}_{t,n}$, which we assume exist in this ``replica-symmetric" regime, and let $\mtx{D} = \text{Diag}(\vct{\pi})$. The SE equations read
\begin{align*}
\mtx{M}_{t+1} &= \sum_{r=1}^d \pi_r\E_{\vct{g}}\big[\vct{\eta}(\vct{e}_r + \mtx{X}_{t}^{\half}\vct{g}, \kappa^{-1}\mtx{R}_t)\big] \cdot \vct{e}_r^\intercal,\\
\mtx{Q}_{t+1} &= \sum_{r=1}^d \pi_r\E_{\vct{g}}\big[ \vct{\eta}(\vct{e}_r + \mtx{X}_{t}^{\half}\vct{g}, \kappa^{-1}\mtx{R}_t) 
 \cdot \vct{\eta}(\vct{e}_r + \mtx{X}_{t}^{\half}\vct{g}, \kappa^{-1}\mtx{R}_t)^\intercal \big],\\
\mtx{X}_{t}~~~ &= \kappa^{-1}(\mtx{D} - \mtx{M}_{t} - \mtx{M}_{t}^\intercal + \mtx{Q}_{t}),\\
\mtx{R}_t ~~~ &= \Diag(\mtx{Q}_t \one) - \mtx{Q}_t, 
\end{align*}
with $\vct{g} \sim \normal(\vct{0},\mtx{I})$. The matrix $\kappa \mtx{X}_{t}$ is the covariance matrix of the error of the estimates output by AMP at time $t$, and $\mtx{R}_t$ can be interpreted as the average covariance matrix of the estimates themselves. Note that the parameter $\alpha$ has disappeared from the characterization by the SE equations, just as in the information theoretic study~\cite{alaoui2016decoding}.  

The full derivation of these equations is relegated to Appendix~\ref{sxn:SE}. The main hypothesis behind the derivation, which we \emph{do not} rigorously verify, is that the variables $\vct{z}^t_i$ are asymptotically Gaussian, centered about $\vct{x}^*_i$ and with covariance $\mtx{X}_t$: the measure $\frac{1}{n}\sum_{i=1}^n \delta_{\vct{z}_{i}^t - \vct{x}^*_i}$ converges weakly to $\normal(\vct{0},\mtx{X}_t)$. We refer to~\cite{bayati2011dynamics,bayati2012universality} for rigorous results, the assumptions of which do not apply to this setting. It is an interesting problem to prove the exactness of the SE equations in this setting.

\subsection{Simplification of SE}
Here we simplify the system of SE equations above to a single iteration. 
 This crucially relies on the following Proposition:
\begin{proposition}\label{Nishimori}
If $\mtx{M}_0 = \mtx{Q}_0$, then for all $t$ we have 
\begin{itemize}
\item[(i)] $\mtx{M}_t = \mtx{Q}_t$. In particular, $\mtx{M}_t$ is a symmetric PSD matrix, and $\mtx{M}_t\one = \vct{\pi}$.
\item[(ii)] $\mtx{R}_t = \kappa \mtx{X}_t = \mtx{D} - \mtx{M}_t$. 
\end{itemize}  
\end{proposition}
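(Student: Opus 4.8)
The plan is to recognize the thresholding map $\vct{\eta}$ as a Bayesian posterior mean and then to run an induction in $t$, carrying a bundled invariant, the crux being a Nishimori identity that is available precisely because the denoiser becomes matched to the effective noise. First I would identify $\vct{\eta}(\vct{z},\mtx{\Sigma})$ with the posterior mean $\E[\vct{x}\mid\vct{z}]$ of the denoising model in which $\vct{x}=\vct{e}_r$ with prior probability $\pi_r$ and $\vct{z}=\vct{x}+\mtx{\Sigma}^{\half}\vct{g}$, $\vct{g}\sim\normal(\vct{0},\mtx{I})$. Bayes' rule gives $\P(\vct{x}=\vct{e}_r\mid\vct{z})\propto \pi_r\,e^{-\half(\vct{z}-\vct{e}_r)^\intercal\mtx{\Sigma}^{-1}(\vct{z}-\vct{e}_r)}$, whose normalizer is exactly $Z(\vct{z},\mtx{\Sigma})$, so $\sum_r\vct{e}_r\,\P(\vct{x}=\vct{e}_r\mid\vct{z})=\vct{\eta}(\vct{z},\mtx{\Sigma})$.

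I would then induct on $t$ with the invariant $P(t)$: \emph{$\mtx{M}_t=\mtx{Q}_t$, $\mtx{M}_t$ is symmetric PSD, and $\mtx{M}_t\one=\vct{\pi}$.} For the base case $P(0)$, the hypothesis gives $\mtx{M}_0=\mtx{Q}_0$; since $\mtx{Q}_0$ is a limit of averages of outer products $\hat{\vct{x}}_i^0\hat{\vct{x}}_i^{0\intercal}$ it is symmetric PSD, and the column-sum vector $\mtx{M}_0^\intercal\one=\frac1n\sum_i\vct{x}_i^*=\vct{\pi}$ because each $\hat{\vct{x}}_i^0\in\Delta^{d-1}$ satisfies $\one^\intercal\hat{\vct{x}}_i^0=1$; combined with the symmetry $\mtx{M}_0=\mtx{Q}_0$ this yields $\mtx{M}_0\one=\vct{\pi}$. (With the noninformative initialization $\mtx{M}_0=\mtx{Q}_0=\vct{\pi}\vct{\pi}^\intercal$ all of this is immediate.)

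For the inductive step, assume $P(t)$ and first deduce part (ii) at time $t$. Symmetry of $\mtx{M}_t$ collapses $\mtx{X}_t=\kappa^{-1}(\mtx{D}-\mtx{M}_t-\mtx{M}_t^\intercal+\mtx{Q}_t)=\kappa^{-1}(\mtx{D}-\mtx{M}_t)$, while $\mtx{Q}_t\one=\mtx{M}_t\one=\vct{\pi}$ gives $\Diag(\mtx{Q}_t\one)=\mtx{D}$ and hence $\mtx{R}_t=\mtx{D}-\mtx{Q}_t=\mtx{D}-\mtx{M}_t$; thus $\mtx{R}_t=\kappa\mtx{X}_t=\mtx{D}-\mtx{M}_t$. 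The payoff is the \emph{matched} condition $\kappa^{-1}\mtx{R}_t=\mtx{X}_t$: the matrix argument of $\vct{\eta}$ in the SE update coincides with the covariance of the effective noise $\mtx{X}_t^{\half}\vct{g}$. Setting $\vct{z}:=\vct{e}_r+\mtx{X}_t^{\half}\vct{g}$ with $r\sim\vct{\pi}$, the pair $(\vct{x},\vct{z})=(\vct{e}_r,\vct{z})$ is then an instance of the denoising model with $\mtx{\Sigma}=\kappa^{-1}\mtx{R}_t$, so $\vct{\eta}(\vct{z},\kappa^{-1}\mtx{R}_t)=\E[\vct{x}\mid\vct{z}]$. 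The Nishimori identity is now just the tower property,
\[
\mtx{M}_{t+1}=\E\big[\E[\vct{x}\mid\vct{z}]\,\vct{x}^\intercal\big]=\E\big[\E[\vct{x}\mid\vct{z}]\,\E[\vct{x}^\intercal\mid\vct{z}]\big]=\E\big[\vct{\eta}\,\vct{\eta}^\intercal\big]=\mtx{Q}_{t+1},
\]
where the middle step pulls the $\vct{z}$-measurable factor $\E[\vct{x}\mid\vct{z}]$ inside the conditional expectation. This proves $\mtx{M}_{t+1}=\mtx{Q}_{t+1}$. The remaining invariants close at once: $\mtx{Q}_{t+1}=\E[\vct{\eta}\vct{\eta}^\intercal]$ is symmetric PSD, and $\mtx{Q}_{t+1}\one=\E[\vct{\eta}(\vct{\eta}^\intercal\one)]=\E[\vct{\eta}]=\vct{\pi}$ since $\vct{\eta}\in\Delta^{d-1}$ and the posterior mean averages to the prior mean. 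Hence $P(t+1)$ holds, giving (i) for all $t$, and (ii) at each $t$ is exactly the computation opening this step, valid whenever $P(t)$ holds.

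The one delicate point I anticipate is that $\mtx{R}_t$ and $\mtx{X}_t$ are singular: $\mtx{R}_t\one=\vct{0}$ and $(\mtx{D}-\mtx{M}_t)\one=\vct{\pi}-\vct{\pi}=\vct{0}$, so both the signal differences $\vct{e}_r-\vct{e}_s$ and the noise $\mtx{X}_t^{\half}\vct{g}$ live in $\one^{\perp}$. One should therefore read $\mtx{\Sigma}^{-1}$ in $\vct{\eta}$ as the Moore–Penrose pseudoinverse $\mtx{\Sigma}^{+}$ and treat the denoising model as a degenerate Gaussian supported on the affine subspace through $\vct{e}_r$ directed by $\one^{\perp}$. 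This affects neither the identification of $\vct{\eta}$ with the posterior mean, which only involves ratios of densities on the common support, nor the tower-property identity, which is purely measure-theoretic; everything else is routine.
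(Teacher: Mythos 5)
Your proof is correct, and it takes a genuinely different route from the paper's. The paper proves the three identities ($\mtx{Q}_t\one=\vct{\pi}$, symmetry of $\mtx{M}_t$, and $\mtx{M}_t=\mtx{Q}_t$) by three explicit Gaussian change-of-variables computations of the form $\vct{g}+\vct{\delta}_{rs}\to\vct{g}$ applied to the integral representations of $\E_{\vct{g}}[\eta_{rs}]$ and $\E_{\vct{g}}[\eta_{lr}\eta_{ls}]$, within an induction whose hypothesis bundles $\mtx{M}_{t-1}=\mtx{Q}_{t-1}$ with $\mtx{R}_{t-1}=\kappa\mtx{X}_{t-1}$ (which is equivalent to the invariant you carry). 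You instead identify $\vct{\eta}(\cdot,\mtx{\Sigma})$ as the Bayes posterior mean of a matched Gaussian channel and obtain all three identities simultaneously from the tower property; your observation that the induction hypothesis is exactly what guarantees the matching $\kappa^{-1}\mtx{R}_t=\mtx{X}_t$ is the right way to see why the argument closes. This is shorter and exposes the structural reason the identities hold --- it is the ``Nishimori'' mechanism the paper names but verifies by hand, and indeed the paper's change of variables is precisely the coordinate form of your conditioning swap. What the paper's computational route buys in exchange is that it never needs to interpret a degenerate Gaussian as a channel (you correctly flag the rank deficiency of $\mtx{X}_t$ and $\mtx{R}_t$; note that since every $\vct{e}_r$ lies on the hyperplane $\{\one^\intercal\vct{v}=1\}$, all the shifted supports coincide on $\mathrm{span}(\one)^\perp$, so your pseudoinverse reading is consistent with the paper's continuity convention even when $\mtx{X}_t$ has further degeneracies), and its integral manipulations are reused almost verbatim later in the proof of Proposition~\ref{monotonicity_f} (Lemma~\ref{symmetrization}). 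Your base-case verification of $\mtx{M}_0\one=\vct{\pi}$ is also slightly more careful than the paper's, which leaves the base case implicit.
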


The proof of the above proposition is deferred to Appendix~\ref{sxn:proofs}. We pause to make a few remarks. The assumption of the Proposition could be enforced for example by setting the initial estimates of AMP as $\hat{\vct{x}}_i^0 = \vct{\pi}$ for all $i$. This yields $\mtx{M}_0 = \mtx{Q}_0 = \vct{\pi} \vct{\pi}^\intercal$, and hence $\mtx{X}_0 = \kappa^{-1}(\mtx{D} - \vct{\pi}\vct{\pi}^\intercal)$. The statements in the Proposition ---together referred to as the \emph{Nishimori identities} in the statistical physics literature~\cite{zdeborova2015statistical}--- simplify the SE equations to a single iteration on $\mtx{X}_{t}$.  To succinctly present this simplification, for $r \in \{1,\cdots,d\}$, and $\mtx{X} \succeq \mtx{0}$, we let
\[\vct{\eta}_r(\mtx{X}) := \vct{\eta}(\vct{e}_r + \mtx{X}^{\half}\vct{g}, \mtx{X})  \in \Delta^{d-1}.\]
Then, the SE equations can be seen to boil down to the single equation
\begin{equation}\label{state_evolution}
\mtx{X}_{t+1} = \kappa^{-1}f(\mtx{X}_{t}),
\end{equation}
where, recalling that  $\vct{g} \sim \normal(\vct{0},\mtx{I})$, we define
\begin{align}\label{map_f_q}
f(\mtx{X}) &:= \mtx{D} -  \sum_{r=1}^d \pi_r\E_{\vct{g}}\big[\vct{\eta}_r(\mtx{X})\vct{\eta}_r(\mtx{X})^\intercal\big]\\
 &= \mtx{D} - \sum_{r=1}^d \pi_r \E_{\vct{g}}\left[\vct{\eta}_r(\mtx{X}) \right] \cdot \vct{e}_r^\intercal, \label{map_f_m}\\
&= \sum_{r=1}^d \pi_r \E_{\vct{g}}\left[(\vct{e}_r - \vct{\eta}_r(\mtx{X})) \cdot (\vct{e}_r - \vct{\eta}_r(\mtx{X}))^\intercal \right],\label{map_f_psd}
\end{align}
where equations \eqref{map_f_q} and \eqref{map_f_m} correspond to substituting the value of $\mtx{Q}_t$ and $\mtx{M}_t$ into statement (ii) of the above proposition, while the last equality~\eqref{map_f_psd} is just a consequence of the first two, \eqref{map_f_q} and \eqref{map_f_m}.  
Furthermore, via elementary algebra, the coordinates of the vector $\vct{\eta}_r(\mtx{X})$ can written as
\begin{equation}\label{eta_function}
\left(\vct{\eta}_r(\mtx{X})\right)_s = \frac{\pi_s \exp\left(-\vct{g}^\intercal \mtx{X}^{-\half}(\vct{e}_r - \vct{e}_{s})-\half\twonorm{\mtx{X}^{-\half}(\vct{e}_r - \vct{e}_{s})}^2\right)}{Z_r(\mtx{X})},
\end{equation}
with 
\[Z_r(\mtx{X}) :=  \sum_{s=1}^d \pi_{s} \exp\left(-\vct{g}^\intercal \mtx{X}^{-\half}(\vct{e}_r - \vct{e}_{s})-\half\twonorm{\mtx{X}^{-\half}(\vct{e}_r - \vct{e}_{s})}^2\right).\]

\subsection{The mean squared \& 0-1 errors} 
We can measure the performance of AMP by the mean squared error of the estimates $\{\hat{\vct{x}}_i^t\}_{i=1}^n$:
\[\MSE_{t,n} = \frac{1}{n} \sum_{i=1}^n \twonorm{\hat{\vct{x}}_i^t - \vct{x}^*_i}^2.\]
Since $\hat{\vct{x}}_i^t \in \Delta^{d-1}$, an alternative measure of performance would be the expected 0-1 distance between a random category drawn from the multinomial $\hat{\vct{x}}_i$ and the true category $\vct{x}_i^*$, then averaged over $i=1,\cdots,n$. This error would be written as
\begin{align*}
\frac{1}{n}\sum_{i=1}^n \sum_{r=1}^d &\hat{x}_{ir}^t(1-\vct{e}_r^\intercal \vct{x}^*_i) = 1 - \frac{1}{n} \sum_{i=1}^n\hat{\vct{x}}_{i}^{t\intercal} \vct{x}^*_i\\
 &= 1 - \trace(\mtx{M}_{t,n}) = \trace\left(\mtx{D} -\mtx{M}_{t,n} \right).
\end{align*}
On the other hand, the MSE in the large $n$ limit reads 
\begin{align*}
\MSE_{t} &:= \lim_{n \to \infty} \MSE_{t,n} 
= \trace\left(\mtx{Q}_{t} - \mtx{M}_{t} - \mtx{M}_{t}^\intercal +\mtx{D}\right),\\
&= \trace\left(\mtx{D} -\mtx{M}_{t} \right),
\end{align*}
so the two notions of error coincide in the limit. Note that the MSE at each step $t$ can be deduced from SE iterate at time $t$: $\MSE_{t} = \kappa \trace(\mtx{X}_{t})$.

\section{Analysis of the State Evolution dynamics}
In this section we present our main results on the convergence of the SE iteration~\eqref{state_evolution} to a fixed point, 
and the location of the phase transition thresholds  in three special cases. We start by analyzing the SE map $f$ and present some important generic results. 

\subsection{Analysis of the SE map $f$}

From expression~\eqref{map_f_psd}, we see that the map $f$ sends the positive semi-definite (PSD) cone $\symmpsd$ to itself. As written, $f$ is only defined for invertible matrices $\mtx{X}$, but it could be extended by continuity to singular matrices: if $\vct{e}_r -\vct{e}_s$ is in the null space of $\mtx{X}$, we declare that $\exp (-\half \|\mtx{X}^{-\half}(\vct{e}_r -\vct{e}_s)\|^2) = 0$. This convention is consistent with the limiting value of a sequence $\big\{\exp (-\half \|\mtx{X}_n^{-\half}(\vct{e}_r -\vct{e}_s)\|^2)\big\}_{n\ge 0}$ where $\big\{\mtx{X}_n\big\}_{n\ge 0}$ is a sequence of invertible matrices approaching $\mtx{X}$. This also has an interpretation based on an analogy with electrical circuits, which we discuss shortly.
This extension will be also denoted by $f$. It is continuous over the $\symmpsd$, and we have $f(\mtx{0}) = \mtx{0}$. Now, we state an important property of $f$, namely that it is monotone:
\begin{proposition} \label{monotonicity_f}
The map $f$ is order-preserving on $\symmpsd$; i.e., 
for all $\mtx{X},\mtx{Y} \succeq \mtx{0}$, if $\mtx{X} \preceq \mtx{Y}$ then $f(\mtx{X}) \preceq f(\mtx{Y})$. 
\end{proposition}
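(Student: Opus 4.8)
The plan is to recognize $f(\mtx{X})$ as the error-covariance matrix of a Bayes-optimal estimator in a Gaussian channel, and then to deduce Loewner monotonicity from the principle that the minimum mean squared error can only increase as the channel noise grows. Concretely, consider the Bayesian model: draw a label $r\in\{1,\dots,d\}$ with probability $\pi_r$, set the signal $\vct{x}^* = \vct{e}_r$, and observe $\vct{z} = \vct{x}^* + \mtx{X}^{\half}\vct{g}$ with $\vct{g}\sim\normal(\vct{0},\mtx{I})$, so that $\vct{z}\mid\vct{x}^*\sim\normal(\vct{x}^*,\mtx{X})$. A direct computation from \eqref{eta_thresholding} via Bayes' rule shows that $\vct{\eta}(\vct{z},\mtx{X})$ is exactly the posterior mean $\E[\vct{x}^*\mid\vct{z}]$, so that $\vct{\eta}_r(\mtx{X})$ is this posterior mean evaluated on the observation generated from $\vct{x}^* = \vct{e}_r$. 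Averaging over $r\sim\vct{\pi}$ and $\vct{g}$, expression \eqref{map_f_psd} identifies
\[
f(\mtx{X}) = \E\big[(\vct{x}^* - \E[\vct{x}^*\mid\vct{z}])(\vct{x}^* - \E[\vct{x}^*\mid\vct{z}])^\intercal\big],
\]
the error-covariance (MMSE) matrix for estimating $\vct{x}^*$ from $\vct{z}$.

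With this interpretation the statement reduces to showing that the MMSE matrix is order-preserving in the noise covariance. Assume first $\mtx{X}\preceq\mtx{Y}$ with both strictly positive definite. Since $\mtx{Y}-\mtx{X}\succeq\mtx{0}$, I would couple the two channels by setting $\vct{z}_X = \vct{x}^* + \mtx{X}^{\half}\vct{g}_1$ and $\vct{z}_Y = \vct{z}_X + (\mtx{Y}-\mtx{X})^{\half}\vct{g}_2$, with $\vct{g}_1,\vct{g}_2\sim\normal(\vct{0},\mtx{I})$ independent of each other and of $\vct{x}^*$. The added Gaussian noise gives $\vct{z}_Y\mid\vct{x}^*\sim\normal(\vct{x}^*,\mtx{Y})$, so $\vct{z}_Y$ carries the correct marginal law for the $\mtx{Y}$-channel, and by construction $\vct{x}^*\to\vct{z}_X\to\vct{z}_Y$ is a Markov chain, i.e.\ $\vct{z}_Y$ is a degraded observation. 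It then suffices to prove, for any such chain, the Loewner bound $f(\mtx{X})\preceq f(\mtx{Y})$, where the two sides are the error-covariance matrices for estimating $\vct{x}^*$ from $\vct{z}_X$ and from $\vct{z}_Y$.

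The heart of the matter is this matrix data-processing step, which I expect to be the main obstacle: the scalar MMSE-monotonicity is classical, but the Loewner version requires the cross terms to vanish. Writing $\hat{\vct{x}}_X = \E[\vct{x}^*\mid\vct{z}_X]$, the Markov property $\vct{x}^*\perp\vct{z}_Y\mid\vct{z}_X$ together with the tower rule yields $\E[\vct{x}^*\mid\vct{z}_Y] = \E[\hat{\vct{x}}_X\mid\vct{z}_Y]$, so I decompose the $\mtx{Y}$-channel error as
\[
\vct{x}^* - \E[\vct{x}^*\mid\vct{z}_Y] = \underbrace{(\vct{x}^* - \hat{\vct{x}}_X)}_{=:\,A} + \underbrace{(\hat{\vct{x}}_X - \E[\hat{\vct{x}}_X\mid\vct{z}_Y])}_{=:\,B}.
\]
Conditioning on $\vct{z}_X$: the Markov property makes $A$ and $B$ conditionally independent given $\vct{z}_X$ (the former depends on $\vct{x}^*$, the latter on $\vct{z}_Y$), while $\E[A\mid\vct{z}_X] = \hat{\vct{x}}_X - \hat{\vct{x}}_X = \vct{0}$. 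Hence $\E[AB^\intercal\mid\vct{z}_X] = \E[A\mid\vct{z}_X]\,\E[B\mid\vct{z}_X]^\intercal = \mtx{0}$, and the cross term vanishes unconditionally. Therefore $f(\mtx{Y}) = \E[AA^\intercal] + \E[BB^\intercal]\succeq \E[AA^\intercal] = f(\mtx{X})$, which is the desired inequality.

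Finally, I would remove the strict-positivity assumption by continuity: given $\mtx{X}\preceq\mtx{Y}$ in $\symmpsd$, apply the above to $\mtx{X}+\varepsilon\mtx{I}\preceq\mtx{Y}+\varepsilon\mtx{I}$ and let $\varepsilon\downarrow 0$, using that $f$ is continuous on $\symmpsd$ (as recorded just before the proposition) and that the Loewner order is closed. This disposes of the singular cases and completes the argument.
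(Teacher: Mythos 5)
Your proof is correct, and it takes a genuinely different route from the paper's. The paper proves monotonicity by brute force: it parameterizes the segment $\mtx{W}(t)=(1-t)\mtx{X}+t\mtx{Y}$, computes $\frac{\mathrm{d}}{\mathrm{d}t}f(\mtx{W}(t))$ entrywise via repeated Gaussian integration by parts, symmetrizes the resulting expression using a Nishimori-type identity (their Lemma~\ref{symmetrization}), and exhibits the derivative as an explicit average of PSD matrices of the form $\mtx{W}(t)^{-1}(\mtx{Y}-\mtx{X})\mtx{W}(t)^{-1}$ sandwiched by vectors built from the $\eta_{rs}$; the fundamental theorem of calculus then finishes the job. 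You instead observe that, because the denoiser's covariance argument in $\vct{\eta}_r(\mtx{X})=\vct{\eta}(\vct{e}_r+\mtx{X}^{\half}\vct{g},\mtx{X})$ matches the actual channel noise, $f(\mtx{X})$ is exactly the Bayes-optimal error-covariance matrix for the matched Gaussian channel, and you then run the physical-degradation argument: couple the two channels by adding independent noise with covariance $\mtx{Y}-\mtx{X}$, use the Markov property to get $\E[\vct{x}^*\mid\vct{z}_Y]=\E[\hat{\vct{x}}_X\mid\vct{z}_Y]$, and kill the cross term by conditional independence given $\vct{z}_X$ together with $\E[\vct{x}^*-\hat{\vct{x}}_X\mid\vct{z}_X]=\vct{0}$. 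All the steps check out, including the $\varepsilon\mtx{I}$ regularization to cover singular matrices (which the paper's differentiation argument glosses over, since it tacitly assumes invertibility along the whole segment). What each approach buys: yours is shorter, conceptually transparent, and strictly more general --- it uses nothing about the specific prior or the formula for $\vct{\eta}$ beyond its being a posterior mean under the matched model, so it would prove Loewner monotonicity of the MMSE matrix in the noise covariance for any signal prior with finite second moment; the paper's computation, while heavier, yields an explicit formula for the derivative of $f$ along a line, which is potentially reusable information (e.g.\ for local stability analysis of fixed points) that your argument does not produce.
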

The proof of this Proposition is conceptually simple but technical, and is thus deferred to Appendix~\ref{sxn:proofs}. Next, we adopt a combinatorial view of the structure of the SE dynamics. This will help us identity subspaces of $\symmpsd$ that are left invariant by $f$. Note that the definition of $f$ involves $\mtx{X}^{-\half}$ acting on $\text{span}(\one)^\perp$. Additionally, it is easy to verify that for all $\mtx{X} \in \symmpsd$, $f(\mtx{X})\one = \vct{0}$, and $f(\mtx{X})_{rs} \le 0$ for all $r \neq s$. Therefore, without loss of generality, we can restrict the study of the state evolution iteration to the set 
\[\mathcal{A} := \left\{\mtx{X} \in \symmpsd, ~\mtx{X}\one  = \vct{0}, ~X_{rs}\le 0 ~\forall (r,s) ~\text{s.t.}~r\neq s \right\},\] 
since it is invariant under the dynamics. 
The set $\mathcal{A}$ can be seen as the set of Laplacian matrices of weighted graphs on $d$ vertices (every edge $(r,s)$ is weighted by $-X_{rs}$ for $\mtx{X} \in \mathcal{A}$). 
Hence $f$ can be seen as a transformation on weighted graphs. This transformation enjoys the following invariance property: 
\begin{proposition}\label{block_diagonal}
For all $\mtx{X} \in \mathcal{A}$, $f$ preserves the connected component structure of the graph represented by $\mtx{X}$; i.e, two distinct connected components of the graph whose Laplacian matrix is $\mtx{X}$ remain distinct when transformed by $f$.
\end{proposition}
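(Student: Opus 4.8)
The plan is to read the off-diagonal entries of $f(\mtx{X})$ off the explicit formula \eqref{eta_function} and to show that the edge weight $-f(\mtx{X})_{rs}$ is strictly positive when $r,s$ lie in the same connected component of $\mtx{X}$ and is exactly $0$ when they lie in different components. Since $f(\mtx{X})\in\mathcal{A}$ is itself a weighted-graph Laplacian, this pins down the component partition of $f(\mtx{X})$ to coincide with that of $\mtx{X}$, which is slightly stronger than the stated no-merging claim. Concretely, using \eqref{map_f_m} together with $\mtx{D}=\Diag(\vct{\pi})$, for $r\neq s$ we have $-f(\mtx{X})_{rs}=\pi_s\,\E_{\vct{g}}\big[(\vct{\eta}_s(\mtx{X}))_r\big]\ge 0$. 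By \eqref{eta_function} the integrand is $\pi_r/Z_s(\mtx{X})$ times the exponential factor $\exp\!\big(-\vct{g}^\intercal\mtx{X}^{-\half}(\vct{e}_s-\vct{e}_r)-\half\twonorm{\mtx{X}^{-\half}(\vct{e}_s-\vct{e}_r)}^2\big)$, and since the $t=s$ term of the normalizer gives $Z_s(\mtx{X})\ge\pi_s>0$, the sign of $(\vct{\eta}_s(\mtx{X}))_r$ is governed entirely by this exponential factor. Everything thus reduces to deciding when it vanishes.

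The quantity controlling the exponent is $\twonorm{\mtx{X}^{-\half}(\vct{e}_r-\vct{e}_s)}^2$, the effective resistance between $r$ and $s$ in the electrical network whose Laplacian is $\mtx{X}$. First I would record the standard fact that, for a graph Laplacian with connected components $C_1,\dots,C_k$, the null space of $\mtx{X}$ is spanned by the (orthogonal) indicators $\one_{C_1},\dots,\one_{C_k}$. A short computation then gives that the orthogonal projection of $\vct{e}_r-\vct{e}_s$ onto this null space equals $\tfrac{1}{|C(r)|}\one_{C(r)}-\tfrac{1}{|C(s)|}\one_{C(s)}$, where $C(\cdot)$ denotes the component of a node. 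This projection lies in $\text{span}(\one)^\perp$ and vanishes precisely when $C(r)=C(s)$; otherwise it is a nonzero vector of the null space.

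The two cases now follow. If $r$ and $s$ share a component, the projection above is $\vct{0}$, so $\vct{e}_r-\vct{e}_s$ lies in the range of $\mtx{X}$, the vector $\mtx{X}^{-\half}(\vct{e}_r-\vct{e}_s)$ is finite, the exponential factor is strictly positive for every $\vct{g}$, and hence $-f(\mtx{X})_{rs}=\pi_s\,\E_{\vct{g}}[(\vct{\eta}_s(\mtx{X}))_r]>0$: the edge survives. If instead $r$ and $s$ lie in different components, then $\vct{e}_r-\vct{e}_s$ has a nonzero component in the null space of $\mtx{X}$, so $\twonorm{\mtx{X}^{-\half}(\vct{e}_r-\vct{e}_s)}^2=+\infty$ under the convention fixed before \eqref{eta_function}, the exponential factor is $0$, and therefore $f(\mtx{X})_{rs}=0$: no edge is created across the two components. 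Together these show that $f(\mtx{X})$ restricts to a complete (hence connected) graph on each component of $\mtx{X}$ while carrying no cross-component edges, so the partition into components is preserved exactly, giving the proposition.

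The step I expect to be the main obstacle is justifying that the exponential factor truly vanishes across components, because the convention stated before \eqref{eta_function} is phrased only for $\vct{e}_r-\vct{e}_s$ lying \emph{entirely} in the null space, whereas for non-singleton components $\vct{e}_r-\vct{e}_s$ merely has a nonzero null component (the naive pseudoinverse would annihilate this component and leave a finite, positive exponential, which would incorrectly merge the two components). The clean fix is the limiting characterization the paper already invokes: approximate $\mtx{X}$ by connected Laplacians $\mtx{X}_n\to\mtx{X}$ obtained by adding vanishing inter-component conductances, so that each $\mtx{X}_n$ is invertible on $\text{span}(\one)^\perp$. As the bridging weights tend to $0$ the effective resistance $\twonorm{\mtx{X}_n^{-\half}(\vct{e}_r-\vct{e}_s)}^2$ diverges; the linear term $\vct{g}^\intercal\mtx{X}_n^{-\half}(\vct{e}_s-\vct{e}_r)$ is $O\big(\twonorm{\mtx{X}_n^{-\half}(\vct{e}_s-\vct{e}_r)}\big)$ by Cauchy--Schwarz and is thus dominated by the diverging quadratic, so the exponential factor tends to $0$ for almost every $\vct{g}$. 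Since $\vct{\eta}_s(\mtx{X}_n)\in\Delta^{d-1}$ has entries in $[0,1]$, dominated convergence yields $\E_{\vct{g}}[(\vct{\eta}_s(\mtx{X}))_r]=0$, which closes the distinct-components case and completes the argument.
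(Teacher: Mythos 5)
Your proof is correct and follows essentially the same route as the paper's: both identify $\twonorm{\mtx{X}^{-1/2}(\vct{e}_r-\vct{e}_s)}^2$ as the effective resistance between $r$ and $s$, which is finite when they share a component and $+\infty$ otherwise, forcing the cross-component entries of $f(\mtx{X})$ to vanish. You are in fact somewhat more careful than the paper: you notice that the stated extension-by-continuity convention literally covers only the case where $\vct{e}_r-\vct{e}_s$ lies entirely in the null space of $\mtx{X}$, patch the general cross-component case (where it merely has a nonzero null-space projection, so a naive pseudoinverse would give a finite answer) with a limiting argument over connected approximants plus dominated convergence, and additionally show that within-component edges acquire strictly positive weight, so the component partition is preserved exactly rather than merely not merged.
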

\begin{proof}
The proof relies on the concept of \emph{effective resistance}. One can view a graph of Laplacian $\mtx{X} \in \mathcal{A}$ as a network of resistors with resistances $1/(-X_{rs})$. The effective resistance of an edge $(r,s)$ is the resistance of the entire network when one unit of current is injected at $r$ and collected at $s$ (or vice-versa). Its expression is a simple consequence of Kirchhoff's law, and is equal to $R_{rs}:=\twonorm{\mtx{X}^{-1/2}(\vct{e}_r - \vct{e}_s)}^2$ (see e.g.\ \cite{spielman_sgt_course}). It is clear that the effective resistance of an edge is finite if and only if both its endpoints belong to the same connected component of the graph, otherwise $R_{rs} = +\infty$, and $(\vct{\eta}_r(\mtx{X}))_s = 0$. This causes $f$ to ``factor" across connected components, and thus acts on them independently.      
\end{proof}

Next, let us look at the limit of $f(t\mtx{X})$ for large $t$. For $\mtx{X} \in \mathcal{A}$ invertible on $\text{span}(\one)^\perp$, we have  
$\lim_{t \to \infty} f(t\mtx{X}) = \mtx{D} - \vct{\pi}\vct{\pi}^\intercal$, since $\vct{\eta}_r(t\mtx{X}) \to \vct{\pi}$ almost surely. More generally, if $\mtx{X}$ represents a graph with $\{V_k\}_{1\le k \le K}$ connected components, $(\vct{\eta}_r(t\mtx{X}))_s \neq 0$ only if $r,s$ are in the same component. Hence, $\vct{\eta}_r(t\mtx{X}) \to \frac{\mtx{P}_k \vct{\pi}}{\one^\intercal \mtx{P}_k \vct{\pi}}$, where $\mtx{P}_k$ is the orthogonal projector onto the span of the coordinates in $V_k$ where $r \in V_k$, and we have 
\begin{equation}\label{limit_def_L}
\lim_{t \to \infty} f(t\mtx{X}) = \mtx{D} - \sum_{k=1}^{K} \frac{\mtx{P}_k\vct{\pi}\vct{\pi}^\intercal\mtx{P}_k}{\one^\intercal \mtx{P}_k \vct{\pi}} =: \mtx{L}_K.\end{equation}
By Propositions~\ref{monotonicity_f} and~\ref{block_diagonal} and the limit calculation~\eqref{limit_def_L}, we deduce that for any partition $\{V_k\}_{1\le k \le K}$ of $\{1,\cdots,d\}$, and all Laplacian matrices $\mtx{X} \succeq \mtx{0}$ of graphs with connected components $V_1,\cdots,V_K$, we have 
\begin{equation}\label{upper_limit}
f(\mtx{X}) \preceq \mtx{L}_K.
\end{equation}  
Indeed, since $\mtx{X} \preceq t \mtx{X}$ for all $t \ge 1$, we have $f(\mtx{X}) \preceq  f(t\mtx{X})$ by monotonicity of $f$. Letting $t \to \infty$ settles the claim. In particular, with $K=1$, $\mtx{L}_1  =  \mtx{D} - \vct{\pi}\vct{\pi}^\intercal$, and for all $\mtx{X} \in \mathcal{A}$ representing a connected graph (i.e.\ $\rank(\mtx{X}) = d-1$), we have $f(\mtx{X}) \preceq \mtx{D} - \vct{\pi}\vct{\pi}^\intercal$.
We are now ready to state the main result of this subsection. 
\begin{theorem} 
Let $\{V_k\}_{1\le k \le K}$ be a partition of $\{1,\cdots,d\}$, and $\mtx{L}_K$ defined as in~\eqref{limit_def_L}. Let $\mtx{X}_0 \in \mathcal{A}$ with connected components $V_1,\cdots,V_K$, and such that $\mtx{X}_0 \succeq \kappa^{-1}\mtx{L}_K$. 
If the SE iteration~\eqref{state_evolution} is initialized from $\mtx{X}_0$, then the sequence $\left\{\mtx{X}_t\right\}_{t\ge 0}$ is decreasing in the PSD order, i.e., $\mtx{X}_{t} \preceq \mtx{X}_{t-1}$ for all $t\ge 1$, and converges to a fixed point $\mtx{X}^*$, i.e., $\mtx{X}^* = \kappa^{-1}f(\mtx{X}^*)$.
\end{theorem}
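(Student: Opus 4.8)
The plan is to show that $\{\mtx{X}_t\}_{t\ge 0}$ is a sequence that is decreasing in the Loewner order and bounded below by $\mtx{0}$, and then to combine a monotone-convergence argument with the continuity of $f$ to identify the limit as a fixed point. The whole argument rests on two structural facts already established: the order-preservation of $f$ (Proposition~\ref{monotonicity_f}), and the uniform upper bound $f(\mtx{X}) \preceq \mtx{L}_K$ valid for any $\mtx{X} \in \mathcal{A}$ whose graph has connected components $V_1,\dots,V_K$ (inequality~\eqref{upper_limit}).

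First I would verify the base case $\mtx{X}_1 \preceq \mtx{X}_0$, which is precisely where the hypotheses on $\mtx{X}_0$ are consumed. Since $\mtx{X}_0 \in \mathcal{A}$ has connected components $V_1,\dots,V_K$, inequality~\eqref{upper_limit} gives $f(\mtx{X}_0) \preceq \mtx{L}_K$; multiplying by $\kappa^{-1} > 0$ yields $\mtx{X}_1 = \kappa^{-1} f(\mtx{X}_0) \preceq \kappa^{-1}\mtx{L}_K$, and the assumption $\mtx{X}_0 \succeq \kappa^{-1}\mtx{L}_K$ then closes the chain to $\mtx{X}_1 \preceq \mtx{X}_0$. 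The remainder of the monotonicity is a one-line induction: if $\mtx{X}_t \preceq \mtx{X}_{t-1}$, applying the order-preserving map $\kappa^{-1} f$ to both sides (Proposition~\ref{monotonicity_f}) gives $\mtx{X}_{t+1} \preceq \mtx{X}_t$. Note that the connected-component hypothesis and the matrix $\mtx{L}_K$ are needed only to seed this induction; once the first descent step is secured, pure monotonicity carries the sequence downward.

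Next I would establish the lower bound. From the representation~\eqref{map_f_psd}, $f(\mtx{X})$ is a nonnegative combination of expected outer products, hence $f(\mtx{X}) \succeq \mtx{0}$ for every $\mtx{X} \succeq \mtx{0}$; consequently $\mtx{X}_t = \kappa^{-1} f(\mtx{X}_{t-1}) \succeq \mtx{0}$ for all $t \ge 1$, while $\mtx{X}_0 \succeq \mtx{0}$ since $\mtx{X}_0 \in \mathcal{A} \subseteq \symmpsd$. Thus $\{\mtx{X}_t\}$ is decreasing and bounded below by $\mtx{0}$ in the Loewner order. Because we work in the finite-dimensional space of symmetric $d \times d$ matrices, this forces convergence: for every fixed $\vct{v} \in \R^d$ the scalar sequence $\vct{v}^\intercal \mtx{X}_t \vct{v}$ is nonincreasing and bounded below by $0$, hence converges, and by polarization every entry $(\mtx{X}_t)_{rs}$ converges. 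Writing $\mtx{X}^*$ for the limit, it is symmetric and PSD as a limit of symmetric PSD matrices.

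Finally, to identify $\mtx{X}^*$ as a fixed point I would pass to the limit in the recursion~\eqref{state_evolution}: since $f$ is continuous on all of $\symmpsd$ (including singular matrices, via the convention on the kernel of $\mtx{X}$ discussed above), $\mtx{X}^* = \lim_t \mtx{X}_{t+1} = \kappa^{-1}\lim_t f(\mtx{X}_t) = \kappa^{-1} f(\mtx{X}^*)$. The point that requires care is that every iterate $\mtx{X}_t$ is a Laplacian, hence singular, and the limit $\mtx{X}^*$ may even have strictly lower rank (a connected component could split in the limit, as rank can only drop in the limit); it is therefore essential to invoke continuity of the boundary-extended $f$ rather than continuity merely on the interior of the cone. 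Everything else --- the base case, the inductive descent, and the monotone-convergence step --- is routine given Proposition~\ref{monotonicity_f} and the bound~\eqref{upper_limit}.
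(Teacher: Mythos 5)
Your proposal is correct and follows essentially the same route as the paper's proof: the first descent step via the bound~\eqref{upper_limit} and the hypothesis $\mtx{X}_0 \succeq \kappa^{-1}\mtx{L}_K$, the inductive descent via Proposition~\ref{monotonicity_f}, convergence via the scalar sequences $\vct{v}^\intercal \mtx{X}_t \vct{v}$ and polarization, and identification of the limit by continuity of $f$. Your explicit remark that one must invoke the boundary-extended $f$ (since the iterates are singular Laplacians and the limit's rank may drop) is a nice point that the paper's proof leaves implicit.
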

\begin{proof}
Let $\mtx{X}_0$ satisfy the conditions of the Theorem. Using $\mtx{X}_0 \succeq \kappa^{-1}\mtx{L}_K$ and observation~\eqref{upper_limit}, we have $\mtx{X}_1 = \kappa^{-1}f(\mtx{X}_0) \preceq \mtx{X}_0$. By monotonicity of $f$, we deduce that the SE iterates form a monotone sequence: $\mtx{X}_{t+1} \preceq \mtx{X}_{t}$ for all $t \ge 0$. Since $\mtx{X}_t \succeq \mtx{0}$ for all $t$, then this sequence must have a limit\footnote{One can see this by observing that $\{\vct{z}^\intercal \mtx{X}_t \vct{z}\}_{t\ge 0}$ is a non-negative monotonically decreasing sequence for all $\vct{z} \in \R^d$; hence it must have a (non-negative) limit. Then, via the identity $\vct{y}^\intercal \mtx{X}_t \vct{z} = \half ((\vct{y}+\vct{z})^\intercal\mtx{X}_t (\vct{y}+\vct{z}) - (\vct{y}-\vct{z})^\intercal \mtx{X}_t (\vct{y}-\vct{z}))$, one deduces that $\{\vct{y}^\intercal \mtx{X}_t \vct{z}\}_{t \ge 0}$ has a limit for all $\vct{y},\vct{z} \in \R^d$. These limits define a bi-linear operator which is $(\vct{y},\vct{z}) \mapsto \vct{y}^\intercal \mtx{X}^* \vct{z}$.}  $\mtx{X}^* \succeq \mtx{0}$. By continuity of $f$, this limit must satisfy $\mtx{X}^* = \kappa^{-1}f(\mtx{X}^*)$.
\end{proof}
We expect that for $\kappa$ large enough, $\mtx{X}^* = \mtx{0}$, meaning that $\lim \mtx{M}_t = \mtx{D}$ and $\lim \MSE_t = 0$. This situation corresponds to perfect recovery of the planted solution $\{\vct{x}_i^*\}_{i=1}^n$ by AMP. We can easily show that this is the case for 
\begin{equation}\label{kappa_general}
\kappa > \kappa^* := \sup_{\mtx{X} \in \mathcal{A}} \frac{\eigmax(f(\mtx{X}))}{\eigmax(\mtx{X})}.
\end{equation}
Indeed,
\[\eigmax(\mtx{X}_{t+1}) = \kappa^{-1}\eigmax(f(\mtx{X}_t)) \le \frac{\kappa^*}{\kappa}\eigmax(\mtx{X}_t).\] 
If $\kappa > \kappa^*$ then the SE iterates converge to $\mtx{0}$ for \emph{every} initial point. It is currently unclear to us whether this condition is also necessary. Instead, we consider three special cases and exactly locate the phase transitions thresholds. 

\subsection{The binary case}
In this section we treat the case $d=2$, which is akin to a noiseless version of the CDMA problem~\cite{guo2005randomly} or the problem of compressed sensing with binary prior. In this case, the SE iteration becomes one-dimensional. Indeed, we have $\mathcal{A} = \left\{x\vct{u} \vct{u}^\intercal, x\ge 0\right\}$, with $\vct{u} = (1,-1)^\intercal$. And since this space is invariant under~$f$, the latter can be parameterized by one scalar function $x \mapsto \varphi(x)$, defined by  
\[ f(x\vct{u} \vct{u}^\intercal) = \varphi(x) \vct{u} \vct{u}^\intercal, \quad \forall x \ge 0.\]
Next, we compute $\varphi$. For $\mtx{X} = x\vct{u}\vct{u}^\intercal$, we have $\mtx{X}^{-\half}\vct{u} = \frac{1}{\sqrt{2x}} \vct{u}$.
Then, letting $\vct{\pi} = (p, 1-p)^\intercal$, using~\eqref{map_f_m} we have 
\begin{align} \label{calc_binary}
\varphi(x) = f(x\vct{u}\vct{u}^\intercal)_{1,1} 
&= p - p \E_{\vct{g}}\left[\frac{p}{p+(1-p)e^{-\vct{g}^\intercal\vct{u}/\sqrt{2x} - 1/2x}}\right],\nonumber\\
&= \E_{\vct{g}}\left[ \frac{p(1-p)}{ 1-p + pe^{\vct{g}^\intercal\vct{u}/\sqrt{2x} + 1/2x}}\right],\nonumber\\
&= \E_{g}\left[ \frac{p(1-p)}{ 1-p + pe^{g/\sqrt{x} +  1/2x}}\right].
\end{align}
Letting $\mtx{X}_t = a_t \vct{u} \vct{u}^\intercal$, for all $t\ge 0$, the SE reduces to 
\begin{equation}\label{SE_binary}
a_{t+1} = \kappa^{-1}\varphi(a_t).
\end{equation}
The function $\varphi$ is continuous, increasing on $\Rp$, and bounded (since $\varphi(\infty) = p(1-p) <\infty$). Moreover, $\varphi(0)=0$. Therefore, the sequence~\eqref{SE_binary} converges to zero for all initial conditions $a_0 >0$ if and only if $\kappa^{-1}\varphi(x) < x$ for all $x >0$, i.e.
\[\kappa > \kappa^*_{\text{binary}}(p) := \sup_{x>0}~ \E_{g}\left[ \frac{p(1-p)x^2}{ 1-p + p\exp\left(gx + x^2/2 \right)}\right].\]
By a change of variables $g + x/2 \to g$, one can also write this threshold as
\begin{align}\label{kappa_binary}
\kappa^*_{\text{binary}}(p) = \sup_{x>0}~\E_{g}\left[ \frac{p(1-p)x^2 e^{-x^2/8}}{pe^{gx/2} + (1-p)e^{-gx/2}}\right].
\end{align}
If $\kappa < \kappa^*_{\text{binary}}(p)$, then a new stable fixed point $a^*>0$ appears and the sequence $\{a_t\}_{t\ge 0}$ converges to it for all initial conditions $a_0 \ge a^*$, and the asymptotic MSE of the AMP algorithm is $\lim_{t\to \infty} \MSE_t = a^*\trace(\vct{u} \vct{u}^\intercal) = 2a^*$.

Figure~\ref{fig:binary} demonstrates the accuracy of the above theoretical predictions --- the predicted MSE by State Evolution matches the empirical MSE of AMP on a random instance with $n=2000$, across the whole range of $p$ and $\kappa$.
\begin{figure}
\centering 
\includegraphics[width=.43\textwidth]{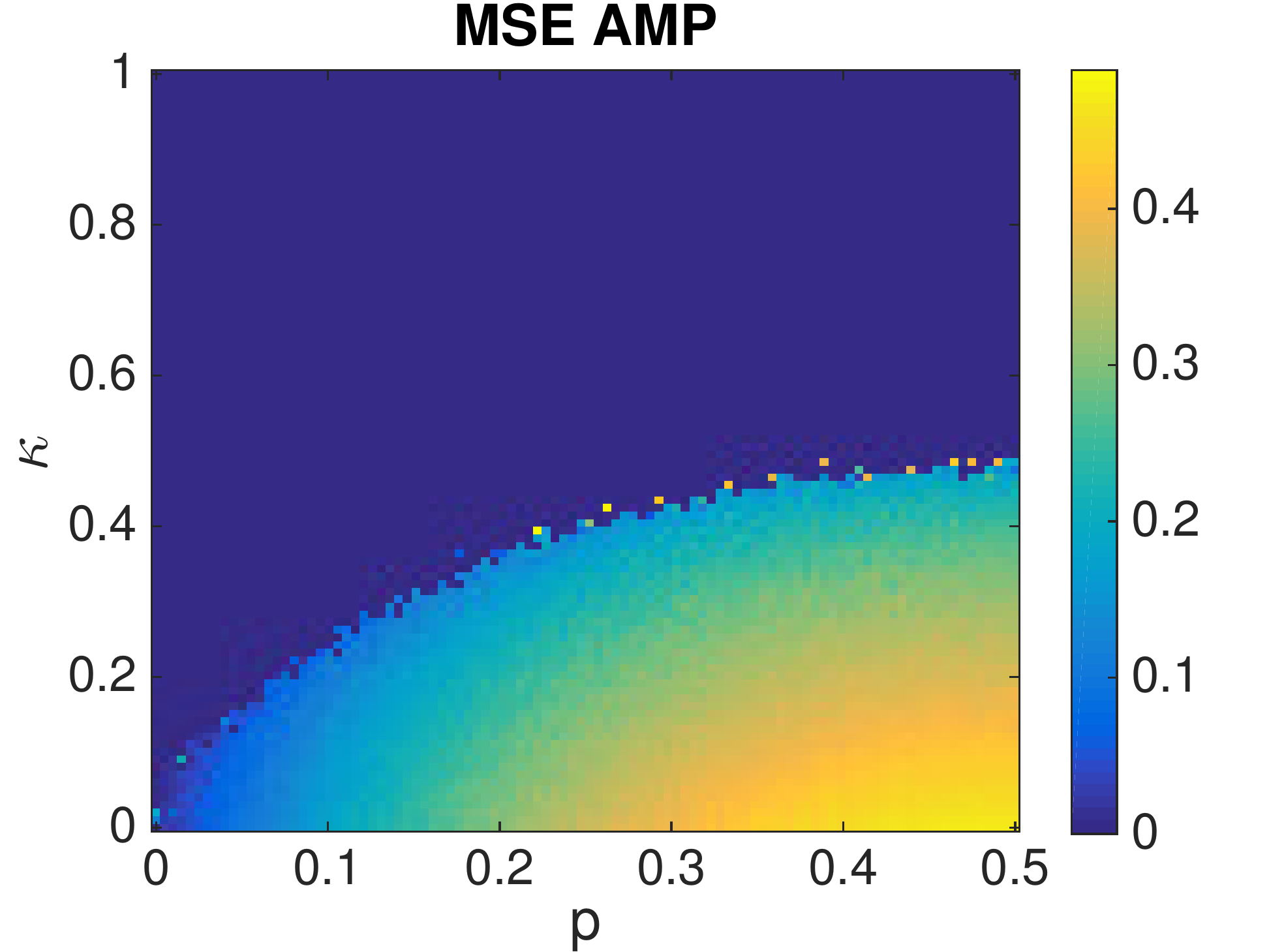}
\includegraphics[width=.43\textwidth]{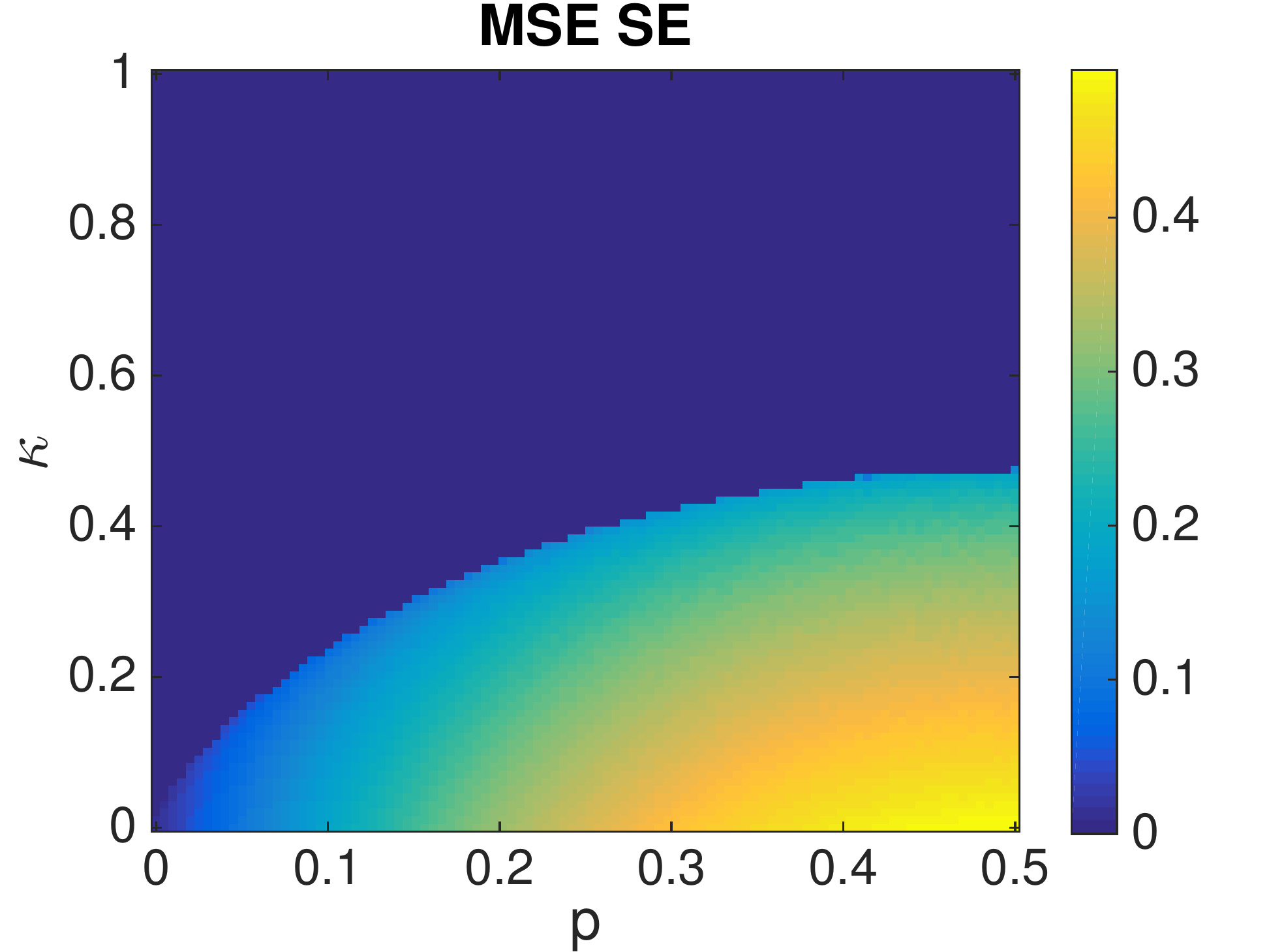}
\caption{MSE of AMP on a random instance with $n=2000$ in the binary case (left), and predicted MSE by State Evolution (right) as a function of $p = \pi_1$ and $\kappa$. The blue region corresponds to exact recovery. The boundary of this region is traced by the curve $p \mapsto \kappa^*_{\text{binary}}(p)$ in equation~\eqref{kappa_binary}.}\label{fig:binary}
\end{figure}
\subsection{The symmetric case}
In this section we treat the symmetric case where all types have equal proportions: $\vct{\pi}  = (\frac{1}{d},\cdots,\frac{1}{d})$, and analyze the SE dynamics. In this situation, the half-line $\{ x (\mtx{D} - \vct{\pi}\vct{\pi}^\intercal)~,~ x \ge 0\}$ is stable under the application of the map $f$, and the dynamics becomes one-dimensional if initialized on this half-line.
\begin{lemma}  \label{lemma:symmetric}  
Assume $\vct{\pi}  = (\frac{1}{d},\cdots,\frac{1}{d})$. For all $x >0$, we have 
\[f\left( x (\mtx{I} - \frac{1}{d}\one \one^\intercal)\right) = \varphi(x)\left(\mtx{I} - \frac{1}{d}\one \one^\intercal\right),\]
with
\[\varphi(x) = \E_{\vct{g}}\left[\frac{\exp(g_2 /\sqrt{x} )}{\exp(g_1 /\sqrt{x} + 1/x) + \sum_{r=2}^d \exp(g_r /\sqrt{x} )}\right].\] 
\end{lemma}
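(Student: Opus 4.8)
The plan is to exploit the permutation symmetry inherited from $\vct{\pi}$ being uniform. Write $\mtx{X} = x\mtx{P}$, where $\mtx{P} := \mtx{I} - \frac{1}{d}\one\one^\intercal$ is the orthogonal projector onto $\text{span}(\one)^\perp$. Since each difference $\vct{e}_r - \vct{e}_s$ lies in $\text{span}(\one)^\perp$, it is fixed by $\mtx{P}$, so that $\mtx{X}^{-\half}(\vct{e}_r - \vct{e}_s) = x^{-\half}(\vct{e}_r - \vct{e}_s)$, reading $\mtx{X}^{-\half}$ as the pseudoinverse square root on $\text{span}(\one)^\perp$. Consequently $\twonorm{\mtx{X}^{-\half}(\vct{e}_r - \vct{e}_s)}^2 = 2x^{-1}$ for $r \neq s$ (and $0$ for $r=s$), while $\vct{g}^\intercal\mtx{X}^{-\half}(\vct{e}_r - \vct{e}_s) = x^{-\half}(g_r - g_s)$.

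First I would substitute these into the explicit formula \eqref{eta_function}. Because all $\pi_s = 1/d$ cancel between numerator and denominator, and after clearing the common factor $e^{-x^{-\half}g_r}$, the coordinates of $\vct{\eta}_r(\mtx{X})$ collapse to the softmax-type expression
\[(\vct{\eta}_r(\mtx{X}))_s = \frac{e^{x^{-\half}g_s}\,e^{-\frac{1}{x}\indi\{s\neq r\}}}{\sum_{s'=1}^d e^{x^{-\half}g_{s'}}\,e^{-\frac{1}{x}\indi\{s'\neq r\}}},\]
where the index $r$ enters only through which terms carry the extra factor $e^{-1/x}$.

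Next I would determine the structure of $f(\mtx{X})$ by equivariance. For any permutation $\sigma$ of $\{1,\dots,d\}$ with matrix $\mtx{P}_\sigma$, the uniformity of $\vct{\pi}$ and the exchangeability of $\vct{g}\sim\normal(\vct{0},\mtx{I})$ make the construction relabeling-invariant, so $f$ is equivariant: $f(\mtx{P}_\sigma\mtx{X}\mtx{P}_\sigma^\intercal) = \mtx{P}_\sigma f(\mtx{X})\mtx{P}_\sigma^\intercal$. Since $\mtx{X} = x\mtx{P}$ commutes with every $\mtx{P}_\sigma$, we have $\mtx{P}_\sigma\mtx{X}\mtx{P}_\sigma^\intercal = \mtx{X}$, hence $f(\mtx{X}) = \mtx{P}_\sigma f(\mtx{X})\mtx{P}_\sigma^\intercal$ for all $\sigma$. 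This forces $f(\mtx{X})$ to have constant diagonal entries and constant off-diagonal entries; together with the already-established fact that $f(\mtx{X})\one = \vct{0}$, any such symmetric matrix is necessarily a scalar multiple of $\mtx{I} - \frac{1}{d}\one\one^\intercal$, so we may write $f(\mtx{X}) = \varphi(x)\big(\mtx{I} - \frac{1}{d}\one\one^\intercal\big)$.

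Finally I would pin down the scalar $\varphi(x)$ from a single off-diagonal entry. By \eqref{map_f_m}, the $(s,r)$ entry of $f(\mtx{X})$ equals $\frac{1}{d}\indi\{s=r\} - \frac{1}{d}\E[(\vct{\eta}_r(\mtx{X}))_s]$; matching the $(2,1)$ entry to $-\varphi(x)/d$ yields $\varphi(x) = \E[(\vct{\eta}_1(\mtx{X}))_2]$. Substituting $r=1$, $s=2$ into the simplified softmax above and multiplying numerator and denominator by $e^{1/x}$ (to move the $e^{-1/x}$ factors onto the $s'=1$ term of the denominator) recovers exactly the claimed formula. The only delicate points are bookkeeping ones --- interpreting $\mtx{X}^{-\half}$ as a pseudoinverse on $\text{span}(\one)^\perp$ since $\mtx{X}$ is singular, and tracking the placement of the $e^{-1/x}$ factor after clearing the common exponential --- and I expect the equivariance reduction to be the conceptual crux, the remaining algebra being routine. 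One could instead verify the full matrix identity entrywise, but the symmetry argument makes the collapse to a single scalar transparent.
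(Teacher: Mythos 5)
Your proposal is correct and follows essentially the same route as the paper's proof: reduce $\mtx{X}^{-\half}(\vct{e}_r-\vct{e}_s)$ to $(\vct{e}_r-\vct{e}_s)/\sqrt{x}$ via the projector structure, use permutation symmetry of the uniform prior and the Gaussian to force constant diagonal and off-diagonal entries, invoke $f(\mtx{X})\one=\vct{0}$ to conclude $f(\mtx{X})$ is a multiple of $\mtx{I}-\frac{1}{d}\one\one^\intercal$, and read off the scalar from one off-diagonal entry. The paper phrases the symmetry step as direct permutation-invariance of the computed entries rather than as an explicit equivariance identity, but the argument is the same.
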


\vspace{.3cm}
\begin{proof}
Let $\mtx{P} = (\mtx{I} - \frac{1}{d}\one \one^\intercal)$, and $\mtx{X} = x \mtx{P}$ with $x>0$. The matrix $\mtx{P}$ is the orthogonal projector on $\text{span}(\one)^\perp$. Therefore, we have 
\[\mtx{X}^{-1/2}(\vct{e}_r - \vct{e}_s) = (\vct{e}_r - \vct{e}_s)/\sqrt{x}.\]  
Therefore for all $r \neq s$, 
\[ f(\mtx{X})_{rs} = - \frac{1}{d}~\E_{\vct{g}}\left[ \frac{\exp\left(-\vct{g}^\intercal(\vct{e}_r - \vct{e}_s)/\sqrt{x} - 1/x\right)}{1 + \sum_{l\neq r} \exp\left(-\vct{g}^\intercal(\vct{e}_r - \vct{e}_l)/\sqrt{x} - 1/x \right)}\right].\]
By permutation-invariance of the Gaussian distribution, we see that $f(\mtx{X})$ is constant on its off-diagonal entries, hence on its diagonal entries as well since $f(\mtx{X})\one = \vct{0}$. Writing $f(\mtx{X}) = \frac{\alpha}{d}\mtx{I} - \frac{\beta}{d}( \one \one^\intercal - \mtx{I})$, we have $(\alpha+\beta) =d\beta$. Hence, $f(\mtx{X}) = \beta (\mtx{I} - \frac{1}{d}\one \one^\intercal)$ with 
\begin{align*}
\beta &= \E_{\vct{g}}\left[ \frac{\exp\left(-\vct{g}^\intercal(\vct{e}_1 - \vct{e}_2)/\sqrt{x} - 1/x\right)}{1 + \sum_{l\neq r} \exp\left(-\vct{g}^\intercal(\vct{e}_1 - \vct{e}_l)/\sqrt{x} - 1/x \right)}\right], \\
&= \E_{\vct{g}}\left[\frac{\exp(g_2 /\sqrt{x} )}{\exp(g_1 /\sqrt{x} + 1/x) + \sum_{r=2}^d \exp(g_r /\sqrt{x} )}\right],
\end{align*}
as claimed.
\end{proof}

Therefore, if the SE iteration is initialized on this half-line: $\mtx{X}_0 = a_0(\mtx{I} - \frac{1}{d}\one \one^\intercal)$, with $ a_0 >0$, then $\mtx{X}_t = a_t (\mtx{I} - \frac{1}{d}\one \one^\intercal)$ for all $t$, with 
\[a_{t+1} =  \kappa^{-1} \varphi(a_t).\]
Just as in the binary case, the function $\varphi$ is continuous, increasing and bounded with $\varphi(0)=0$. Hence, we have convergence to zero for all initial condition $a_0 >0$ if and only if $\kappa^{-1}\varphi(x) < x$ for all $x >0$, i.e. 
\begin{align}\label{kappa_symmetric}
\kappa > \kappa^*_{\text{sym}}(d) := \sup_{x>0} ~ \E_{\vct{g}}\left[\frac{x^2\exp\left(g_2 x\right)}{\exp\left(g_1x + x^2\right) + \sum_{r=2}^d \exp\left(g_r x\right)}\right].
\end{align}
Otherwise, it converges to a non-zero value $a^*$ for all initial conditions $a_0 > a^*$, and the asymptotic MSE of the AMP algorithm is $a^*\trace(\mtx{I} - \frac{1}{d}\one \one^\intercal) = (d-1)a^*$. Using the change of variables $g_1 + x \to g_1$, one can also write this threshold as
\[
\kappa^*_{\text{sym}}(d) = \sup_{x>0} ~  \E_{\vct{g}}\left[\frac{x^2 e^{-x^2/2}\exp((g_1+g_2) x)}{\sum_{r=1}^d \exp(g_r x)}\right]. 
\]
It is not straightforward to read off the magnitude of $\kappa^*_{\text{sym}}(d)$ from the above expression. We provide a table of approximate values for several small values of $d$: 
\begin{center}
\noindent
\begin{tabular}{|c|c|c|c|c|c|c|c|c|c|}
\hline
$d$ & 2 & 3 & 4 & 5 & 6 & 7 & 8 & 9 &10\\
\hline
$\kappa^*_{\text{sym}}$ & .47 & .39 & .34 & .30 & .27 & .24 & .22 & .21 & .20\\
\hline
\end{tabular}
\end{center}
For larger $d$, an asymptotic expression for this threshold may be desirable. We prove the following in Appendix~\ref{sxn:proofs}:
\begin{proposition} \label{asymptotics_large_d}
There exist two constants $0<c_l<c_u$ such that when $d$ is large enough,
\[c_l \frac{\log d}{d} \le \kappa^*_{\textup{sym}}(d) \le c_u \frac{\log d}{d},\] 
Furthermore, one can take $c_l = 1 - o_d(1)$, and $c_u = 2 + o_d(1)$.
\end{proposition}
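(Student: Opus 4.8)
The plan is to set $\Psi(x):=x^2 e^{-x^2/2}\,\E_{\vct g}\big[e^{(g_1+g_2)x}/\textstyle\sum_{r=1}^d e^{g_r x}\big]$, so that $\kappa^*_{\textup{sym}}(d)=\sup_{x>0}\Psi(x)$, and to attack $\Psi$ through two representations. First, tilting the Gaussian measure by $e^{(g_1+g_2)x}$ (equivalently, shifting $g_1,g_2$ by $x$) gives the exact identity $\Psi(x)=x^2 e^{x^2/2}\,\E_{\vct g}\big[(e^{x^2}(e^{g_1x}+e^{g_2x})+S)^{-1}\big]$, where $S:=\sum_{r=3}^d e^{g_r x}$ is independent of $g_1,g_2$ with $\E S=(d-2)e^{x^2/2}$. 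Second, expanding $(\sum_r e^{g_r x})^2=\sum_r e^{2g_r x}+\sum_{i\neq j}e^{(g_i+g_j)x}$ and using exchangeability yields $\Psi(x)=\frac{x^2}{d-1}\big(1-\frac{1}{d e^{x^2/2}}\,\E_{\vct g}[\sum_r e^{2g_r x}/\sum_r e^{g_r x}]\big)$; since Cauchy--Schwarz gives $\sum_r e^{2g_r x}\ge d^{-1}(\sum_r e^{g_r x})^2$, the bracketed expectation is $\ge e^{x^2/2}$, and hence $\Psi(x)\le x^2/d$ for every $x>0$. This clean envelope is the workhorse of the upper bound.

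For the upper bound I would split at $x^2=2\log d$. On $x^2\le 2\log d$ the envelope already gives $\Psi(x)\le x^2/d\le 2\log d/d$, which is exactly the content of $c_u=2$. It remains to show $\Psi(x)\le(1+o_d(1))\log d/d$ when $x^2>2\log d$, a regime in which neither $S$ nor the tilted diagonal terms $e^{g_{1,2}x}$ may be dropped. Using the tilted representation I would bound $S$ from below, with overwhelming probability, by its largest summand over $r\ge 3$: since $\max_{r\ge3}g_r\ge(1-o_d(1))\sqrt{2\log d}$ whp, one has $S\ge e^{m}$ with $m=(1-o_d(1))\sqrt{2\log d}\,x$, reducing the problem to the deterministic two-dimensional integral $I(x)=\E_{g_1,g_2}\big[e^{g_1x}e^{g_2x}/(e^{g_1x}+e^{g_2x}+e^{m})\big]$ (the complementary event is negligible after bounding the integrand by $\tfrac12 e^{(g_1+g_2)x/2}$ and using independence from $g_1,g_2$). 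Splitting $I(x)$ according to whether $e^{g_1x}+e^{g_2x}$ lies below or above $e^{m}$ and evaluating each piece by Gaussian tilting together with Mills-ratio estimates, the computation I anticipate gives both pieces of order $d^{\sqrt{2\beta}-2}$ when $x^2=\beta\log d$. This yields $\Psi(x)\lesssim x^2 e^{-x^2/2}d^{\sqrt{2\beta}-2}=\Theta(\log d)\,d^{-1-(\sqrt{2\beta}-2)^2/4}$, which is at most $(1+o_d(1))\log d/d$ and is maximized at the boundary $\beta=2$; larger $\beta$ only makes it smaller.

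For the lower bound I would evaluate $\Psi$ at the single point $x_d$ with $x_d^2=\log d-2\log\log d$, chosen just inside the range $x^2<\log d$ where $S$ concentrates in $L^2$: there $\mathrm{sd}(S)/\E S\sim e^{x_d^2/2}/\sqrt d\to 0$, so $S=(1+o_d(1))\,d\,e^{x_d^2/2}$ whp, while the diagonal contribution $e^{x_d^2}(e^{g_1x_d}+e^{g_2x_d})$ is $o_d(1)\cdot d\,e^{x_d^2/2}$ whp. On this event the denominator in the tilted representation equals $(1+o_d(1))\,d\,e^{x_d^2/2}$, whence $\Psi(x_d)\ge(1-o_d(1))\,x_d^2/d=(1-o_d(1))\log d/d$; since $\kappa^*_{\textup{sym}}(d)\ge\Psi(x_d)$, this gives $c_l=1-o_d(1)$.

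I expect the main obstacle to be the upper bound in the window $2\log d<x^2<4\log d$, which is precisely where the supremum sits. There all of the pieces --- the bulk term $S$, the prefactor $e^{-x^2/2}$, and the two tilted diagonal terms --- are simultaneously of comparable (polynomial-in-$d$) size, so every crude envelope that discards one of them fails; in particular one must retain the Gaussian tail/Mills-ratio factors rather than bounding them by $1$ when evaluating $I(x)$. The residual factor-of-two gap between $c_l$ and $c_u$ is structural: the clean bound $\Psi\le x^2/d$ is not tight near $x^2=2\log d$, while the concentration-based lower bound can only be pushed to $x^2\approx\log d$, and closing the gap would require a genuinely sharp saddle-point evaluation of $\E[e^{x\max_r g_r}]$.
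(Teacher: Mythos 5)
Your starting point is the same as the paper's: you maximize $\Psi(x)=x^2e^{-x^2/2}\,\E\big[e^{(g_1+g_2)x}/\sum_{r=1}^de^{g_rx}\big]$ (the paper's $\phi_d$ is exactly $\Psi$ after the substitution $x\to x\sqrt{\log(d-1)}$ and rescaling by $\tfrac{d-1}{\log(d-1)}$), and both proofs split at $x^2\approx 2\log d$. Beyond that the routes genuinely diverge. Your exact identity $\Psi(x)=\tfrac{x^2}{d-1}\big(1-\tfrac{1}{de^{x^2/2}}\E[\sum_re^{2g_rx}/\sum_re^{g_rx}]\big)$ and the Cauchy--Schwarz consequence $\Psi(x)\le x^2/d$ are correct, though this envelope is essentially the symmetrized trivial bound (the ratio $\sum_{r\ge2}e^{g_rx}/(e^{g_1x+x^2}+\sum_{r\ge2}e^{g_rx})\le 1$), which is also what the paper uses on the small-$x$ range via $\phi_d(x)\le x^2$. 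In the large-$x$ regime the paper applies Jensen in $g_2,\dots,g_d$ (replacing the inner sum by its mean $(d-1)e^{x^2/2}$) and finishes with one tail bound on $g_1$, which is lighter than your route through $\max_{r\ge3}g_r$ and the two-dimensional integral $I(x)$; I checked that your claimed order $d^{\sqrt{2\beta}-2}$ at $x^2=\beta\log d$ is right, so your route does close, it just costs more Gaussian bookkeeping. Your lower bound is the more interesting departure: you evaluate at $x_d^2=\log d-2\log\log d$ and use $L^2$ concentration of the full sum $S$, whereas the paper evaluates at $x\approx\sqrt{2\log d}$ and uses concentration of $\max_{r\ge2}g_r$ together with $\E[(e^{g_1 a}+1)^{-1}]\to\tfrac12$; both give $c_l=1-o_d(1)$, and yours is arguably the cleaner computation.

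One quantitative claim needs adjusting. You assert the refined analysis yields $\Psi(x)\le(1+o_d(1))\log d/d$ for all $x^2>2\log d$, but the whp bound $S\ge e^{m}$ forces $m=(1-\epsilon_d)\sqrt{2\log d}\,x$ with $\epsilon_d\gtrsim \log\log d/\log d$ (otherwise the bad event is not rare enough to survive multiplication by $\tfrac12e^{x^2/4}$), and this slack costs a factor $d^{O(\epsilon_d)}=(\log d)^{O(1)}$: right at $\beta=2$ your estimate degrades to order $(\log d)^{1+c}/d$, which exceeds $2\log d/d$. This is harmless for the proposition --- run the envelope $\Psi(x)\le x^2/d$ up to $x^2=(2+\delta_d)\log d$ with $\delta_d\to0$ chosen so that $(\sqrt{\beta}-\sqrt2)^2\gg\epsilon_d$ on the remaining range, which still gives $c_u=2+o_d(1)$ --- but as written the split exactly at $2\log d$ does not deliver the stated $1+o_d(1)$ constant in the upper regime.
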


\subsection{The general case initialized with a matching}
Here we consider the SE iteration in arbitrary dimension and with arbitrary proportions of types $\vct{\pi}$, but we initialize the dynamics from a special point $\mtx{X}_0$ that corresponds to a matching of the vertices $\{1,\cdots,d\}$: each edge present in the matching corresponds to its own connected component. This case reveals an interesting behavior which we suspect is generic regardless of the initialization: the existence of a sequence of thresholds $\kappa_1^*,\kappa_2^*,\cdots$ ruling the behavior of the SE dynamics.       
Let $\mathcal{M} = \{(i_1,i_2),(i_3,i_4),\cdots,(i_{K-1},i_{K})\}$ be a matching on the set of vertices $\{1,\cdots,d\}$ (not all vertices are necessarily part of the matching), and let $\mtx{X}_0$ be its Laplacian matrix, where edges are weighted by arbitrary positive numbers. By Proposition~\ref{block_diagonal}, $f$ ``factors" across connected components, thus each edge in the matching will follow its own dynamics independently of the other edges. The edges not initially present in the matching remain inactive forever. For $(r,s) \in \mathcal{M}$, we have $(X_t)_{rr}= (X_t)_{ss} = - (X_t)_{rs} = - (X_t)_{sr}$, and 
\[\mtx{X}_t^{-\half}(\vct{e}_r-\vct{e}_s) = \frac{1}{\sqrt{2(X_t)_{rr}}}(\vct{e}_r-\vct{e}_s),\]
and therefore, using expression~\eqref{map_f_m} and letting $x  =(X_t)_{rr}$, 
\begin{align*}
f(\mtx{X}_t)_{rr} &= \pi_r-\E_{\vct{g}}\left[\left(\vct{\eta}_{r}(\mtx{X}_t)\right)_r\right],\\
&= \pi_r\E_{\vct{g}}\left[\frac{\pi_s }{\pi_r  e^{(g_r - g_s)/\sqrt{2x} + 1/2x} + \pi_s}\right], \\
&= \pi_r\E_{g}\left[\frac{\pi_s }{\pi_r  e^{g/\sqrt{x} + 1/2x} + \pi_s}\right].
\end{align*} 
Therefore, the SE iteration reduces to 
\[(X_{t+1})_{rr} = \kappa^{-1}\E_{g}\left[\frac{\pi_r\pi_s}{\pi_r  e^{g/\sqrt{(X_t)_{rr}} + 1/2(X_t)_{rr}} + \pi_s}\right],\]
for all vertices $(r,s) \in \mathcal{M}$. Note that this iteration is essentially the same as the one in the binary case~\eqref{calc_binary}-\eqref{SE_binary}, where $p$ becomes $\pi_r $ and $1-p$ becomes $\pi_s$.    
For each $(r,s) \in \mathcal{M}$, the above iteration converges to the fixed point zero for every initial point if and only if 
\begin{align}\label{kappa_matching}
\kappa > \kappa^*_{rs} := \sup_{x > 0} ~  \E_{g}\left[\frac{\pi_r\pi_s x^2e^{-x^2/8}}{\pi_r  e^{gx/2}+ \pi_s  e^{-gx/2}}\right].
\end{align}
Here, we symmetrized the expression just as in the binary case~\eqref{kappa_binary}.  Arranging these thresholds as $\kappa^*_1 > \kappa^*_2 > \cdots$ from largest to smallest we see that the fixed point of the SE iteration gains one non-zero edge at each $\kappa^*_i$ as $\kappa$ decreases from some large value to zero. Equivalently, $\mtx{X}^*$ gains a rank one component corresponding to the connected component constituted by that edge. It is an interesting problem to determine the behavior of the SE iteration and locate these thresholds, if they exist, beyond this simple matching case.

\section{Conclusion}
We presented an algorithm for decoding categorical variables of a signal from randomly pooled observations of it, and characterized its performance it terms of a state evolution equation. The analysis of this evolution revealed phase transition phenomena in the parameters of the problem that happen in the linear regime $m/n \to \kappa$. These algorithmic results, combined with information-theoretic ones~\cite{wang2016data,alaoui2016decoding} leave a large region in parameter space ($\gamma \frac{n}{\log n} < m < \kappa n$) where the signal is identifiable but AMP fails at recovering it, hinting at a possible computational hardness in this structured signal recovery problem. This could have interesting applications in privacy-related considerations. Further, we proved the convergence of the SE dynamics to a fixed point. The analysis of the properties of this fixed point as a function of the parameters $\kappa, \vct{\pi}$ in the general case, together with rigorous proof of the exactness of the state evolution equations for this problem are interesting open problems.    

\paragraph{Acknowledgment}
Part of this work was performed 
when FK and LZ were visiting the Simons Institute for the Theory of Computing at UC Berkeley. 
FK  acknowledges funding from the EU (FP/2007-2013/ERC grant agreement 307087-SPARCS). 
MJ acknowledges the support of the Mathematical Data Science program of the Office of Naval Research under grant number N00014-15-1-2670.


\bibliographystyle{alpha}
\bibliography{histograms}


\appendix

\section{Omitted proofs}
 \label{sxn:proofs}

\subsection{Proof of Proposition~\ref{Nishimori}}
We proceed by induction. 
Now assume that $\mtx{M}_{t-1} = \mtx{Q}_{t-1}$ and that $\mtx{R}_{t-1} = \kappa \mtx{X}_{t-1}$. We prove  that $\mtx{R}_{t} = \kappa \mtx{X}_{t}$ and then that $\mtx{M}_{t}$ is symmetric and $\mtx{M}_{t} = \mtx{Q}_{t}$. 

The first step is to show that $\mtx{Q}_{t} \one = \vct{\pi}$. By assumption, $\mtx{X}_{t-1} = \kappa^{-1}(\mtx{D} - \mtx{Q}_{t-1}) = \kappa^{-1} \mtx{R}_{t-1}$.  
Let us define,
\begin{equation}\label{eta}
\eta_{rs} := \vct{\eta}_r(\mtx{X})_s = \frac{\pi_s\exp\left(-\vct{g}^\intercal \mtx{X}^{-1/2}(\vct{e}_r - \vct{e}_{s})-\half\twonorm{\mtx{X}^{-1/2}(\vct{e}_r - \vct{e}_{s})}^2\right)}{Z_r(\mtx{X})}.
\end{equation} 

The $s$th coordinate of $\mtx{Q}_{t} \one$ is 
\[(\mtx{Q}_{t} \one)_s = \sum_{r=1}^d \pi_r\E_{\vct{g}}\left[\left(\vct{\eta}(\vct{e}_r + \mtx{X}_t^{1/2}\vct{g}, \kappa^{-1}\mtx{R}_t)\right)_s\right] = \sum_{r=1}^d \pi_r\E_{\vct{g}}\left[\eta_{rs}\right].\]
Moreover, letting $\vct{\delta}_{rs} = \mtx{X}_{t-1}^{-1/2}(\vct{e}_r -\vct{e}_s)$, we have
\begin{align*}
\E_{\vct{g}}\left[\eta_{rs}\right] 
&= \int \frac{\pi_s\exp(-\half\twonorm{\vct{g} + \vct{\delta}_{rs}}^2)}{\sum_{l=1}^d \pi_{l}\exp(-\half\twonorm{\vct{g} + \vct{\delta}_{rl}}^2)} \frac{e^{-\half\twonorm{\vct{g}}^2}}{(2\pi)^{d/2}} \mathrm{d}\vct{g},\\
&\stackrel{(i)}{=} \int \frac{\exp(-\half\twonorm{\vct{g}}^2)}{\sum_{l=1}^d \pi_{l}\exp(-\half\twonorm{\vct{g} + \vct{\delta}_{rl}}^2)} 
\frac{\pi_s e^{-\half\twonorm{\vct{g} - \vct{\delta}_{rs}}^2}}{(2\pi)^{d/2}} \mathrm{d}\vct{g},\\
&= \int \frac{\pi_s\exp(-\half\twonorm{\vct{g} + \vct{\delta}_{sr}}^2)}{\sum_{l=1}^d \pi_{l}\exp(-\half\twonorm{\vct{g} + \vct{\delta}_{sl}}^2)} \frac{e^{-\half\twonorm{\vct{g}}^2}}{(2\pi)^{d/2}} \mathrm{d}\vct{g}.
\end{align*}
The only non-trivial equality is $(i)$ and it was obtained through a simple change of variable $\vct{g} + \vct{\delta}_{rs}\to \vct{g}$. Therefore,
\[(\mtx{Q}_{t} \one)_s = \pi_s \sum_{r=1}^d \E_{\vct{g}}\left[\frac{\pi_r\exp(-\half\twonorm{\vct{g} + \vct{\delta}_{sr}}^2)}{\sum_{l=1}^d \pi_{l}\exp(-\half\twonorm{\vct{g} + \vct{\delta}_{sl}}^2)}\right] = \pi_s.\]
In addition, the above argument also shows that $\mtx{M}_{t}$ is symmetric since, for $r,s \in \{1,\cdots,d\}$, 
\[\left(\mtx{M}_{t}\right)_{rs} = \pi_s \E_{\vct{g}}[\eta_{sr}] .\]

Now we have that $\mtx{R}_{t} = \mtx{D} - \mtx{Q}_t$, and by symmetry of $\mtx{M}_t$, $\mtx{X}_t = \kappa^{-1}(\mtx{D} - 2\mtx{M}_t + \mtx{Q}_t)$. To complete the proof, it remains to show that $\mtx{M}_t = \mtx{Q}_t$. For $r,s \in \{1,\cdots,d\}$ we have
\begin{align*}
\left(\mtx{Q}_t\right)_{rs} = \sum_{l=1}^d \pi_l ~\E_{\vct{g}}\left[ \eta_{lr}\eta_{ls}\right].
\end{align*}
Once again, we make the change of variable $\vct{g} + \vct{\delta}_{lr} \to \vct{g}$:
\begin{align*}
\left(\mtx{Q}_t\right)_{rs}&= \pi_r \pi_s \sum_{l=1}^d \pi_l ~\int \frac{\exp(-\half\twonorm{\vct{g}}^2) ~ \exp(-\half\twonorm{\vct{g}+\vct{\delta}_{rs}}^2)}{\left(\sum_{l'=1}^d \pi_{l'}\exp(-\half\twonorm{\vct{g} + \vct{\delta}_{rl'}}^2)\right)^2} \frac{e^{-\half\twonorm{\vct{g} -\vct{\delta}_{lr}}^2}}{(2\pi)^{d/2}} \mathrm{d}\vct{g},\\
&= \pi_r \pi_s \E_{\vct{g}}\left[\frac{\exp(-\half\twonorm{\vct{g} + \vct{\delta}_{rs}}^2)}{\sum_{r'=1}^d \pi_{r'}\exp(-\half\twonorm{\vct{g} +\vct{\delta}_{rl'}}^2)} \right],\\
&= \left(\mtx{M}_t\right)_{sr}.
\end{align*}

\subsection{Proof of Proposition~\ref{monotonicity_f}}
The map $f$ is differentiable at every $\mtx{X} \succeq \mtx{0}$ invertible on $\text{span}(\one)^\perp$. Let $ \mtx{0} \preceq \mtx{X} \preceq \mtx{Y}$, and $\mtx{W}: [0,1] \to \symmpsd$ defined by $\mtx{W}(t) = (1-t)\mtx{X} + t\mtx{Y}$. We will show that $ \frac{\mathrm{d}}{\mathrm{d}t}f(\mtx{W}(t)) \succeq \mtx{0}$ for all $t \in [0,1]$ and conclude with the fundamental theorem of calculus 
\[f(\mtx{Y}) - f(\mtx{X}) = \int_0^1 \frac{\mathrm{d}}{\mathrm{d}t}f(\mtx{W}(t))\mathrm{d}t.\]
We start by computing the derivative of each entry of $f(\mtx{W}(t))$. Let $r, s \in \{1,\dots,d\}$. We have
\[
\frac{\mathrm{d}}{\mathrm{d}t}f(\mtx{W}(t))_{rs} = - \frac{\mathrm{d}}{\mathrm{d}t} \pi_r \E\left[\left(\vct{\eta}_r(\mtx{W}(t))\right)_s\right].
\]
To prepare for further calculations, let us write 
\[\mtx{A}(t) := \mtx{W}(t)^{-1/2}\frac{\mathrm{d}}{\mathrm{d}t}\left(\mtx{W}(t)^{-1/2}\right),\] 
and 
\[\mtx{B}(t) := \frac{\mathrm{d}}{\mathrm{d}t}\left(\mtx{W}(t)^{-1}\right) = - \mtx{W}(t)^{-1}\cdot \frac{\mathrm{d}}{\mathrm{d}t}\mtx{W}(t) \cdot\mtx{W}(t)^{-1}.\] 
We observe by the chain rule of differentiation that 
\begin{align}\label{a_plus_at_eq_b}
\mtx{A}(t) + \mtx{A}(t)^\intercal = \mtx{B}(t).
\end{align}
This identity will be used several times. Now we start computing the derivative. Let
\begin{align*}
D_{rs} :&= \pi_s\frac{\mathrm{d}}{\mathrm{d}t} \exp\left(-\vct{g}^\intercal \mtx{W}(t)^{-1/2}(\vct{e}_r - \vct{e}_{s})-\frac{1}{2}\twonorm{\mtx{W}(t)^{-1/2}(\vct{e}_r - \vct{e}_{s})}^2\right) \\
&= \pi_s \left(-\vct{g}^\intercal \frac{\mathrm{d}}{\mathrm{d}t}\left(\mtx{W}(t)^{-1/2}\right) (\vct{e}_r - \vct{e}_s) - \half (\vct{e}_r - \vct{e}_s)^\intercal\mtx{B}(t)(\vct{e}_r - \vct{e}_s) \right) \\
&~~~~~ \times \exp\left(-\vct{g}^\intercal \mtx{W}(t)^{-1/2}(\vct{e}_r - \vct{e}_{s})-\frac{1}{2}\twonorm{\mtx{W}(t)^{-1/2}(\vct{e}_r - \vct{e}_{s})}^2\right).
\end{align*}
Then,
\begin{align}\label{derivative_eta}
\frac{\mathrm{d}}{\mathrm{d}t} \vct{\eta}_r(\mtx{W}(t))_s = \frac{D_{rs}}{Z_r(\mtx{W}(t))} - \vct{\eta}_r(\mtx{W}(t))_s \times \sum_{l=1}^d \frac{D_{rl}}{Z_r(\mtx{W}(t))}.
\end{align}
Now, by differentiating under the expectation sign, we are lead to process expressions of the form
\[\E_{\vct{g}}\left[\frac{D_{rs}}{Z_r(\mtx{W}(t))}\right] ~~~ \text{and} ~~~ \E_{\vct{g}}\left[\vct{\eta}_r(\mtx{W}(t))_s \frac{D_{rl}}{Z_r(\mtx{W}(t))}\right].\]
Here, the Gaussian integration by parts formula
\[\E_{g}\left[gh(g)\right] = \E_{g}\left[h'(g)\right]\]   
for all univariate differentiable functions $h$ with moderate growth (say polynomial) at infinity, will be used multiple times. Recalling 
\begin{equation*}
\eta_{rs} = \vct{\eta}_r(\mtx{W}(t))_s = \frac{\pi_s\exp\left(-\vct{g}^\intercal \mtx{W}(t)^{-1/2}(\vct{e}_r - \vct{e}_{s})-\frac{1}{2}\twonorm{\mtx{W}(t)^{-1/2}(\vct{e}_r - \vct{e}_{s})}^2\right)}{Z_r(\mtx{W}(t))},
\end{equation*} 
from~\eqref{eta}, we have
\begin{align*}
\E_{\vct{g}}\left[\vct{g}^\intercal \frac{\mathrm{d}}{\mathrm{d}t}\left(\mtx{W}(t)^{-1/2}\right) (\vct{e}_r - \vct{e}_s) ~\eta_{rs}\right]
&= \E_{\vct{g}}\left[\left(\nabla_{\vct{g}}\eta_{rs}\right)^\intercal \frac{\mathrm{d}}{\mathrm{d}t}\left(\mtx{W}(t)^{-1/2}\right) (\vct{e}_r - \vct{e}_s)\right]\\
&= -(\vct{e}_r - \vct{e}_{s})^\intercal \mtx{A}(t)(\vct{e}_r - \vct{e}_{s})~\E_{\vct{g}}\left[\eta_{rs}\right] \\
&~~  + \sum_{l=1}^d (\vct{e}_r - \vct{e}_{l})^\intercal \mtx{A}(t)(\vct{e}_r - \vct{e}_{s})~\E_{\vct{g}}\left[\eta_{rs}\eta_{rl}\right], 
\end{align*}
and similarly,
\begin{align*}
\E_{\vct{g}}\left[\vct{g}^\intercal \frac{\mathrm{d}}{\mathrm{d}t}\left(\mtx{W}(t)^{-1/2}\right) (\vct{e}_r - \vct{e}_l)~ \eta_{rs} \eta_{rl}\right]
&= -\big((\vct{e}_r - \vct{e}_{l})^\intercal \mtx{A}(t)(\vct{e}_r - \vct{e}_{l}) \\
&~~~~~\quad + (\vct{e}_r - \vct{e}_{s})^\intercal \mtx{A}(t)(\vct{e}_r - \vct{e}_{l})\big) \E_{\vct{g}}\left[\eta_{rs}\eta_{rl}\right] \\
&~~  + 2\sum_{r'=1}^d (\vct{e}_r - \vct{e}_{r'})^\intercal \mtx{A}(t)(\vct{e}_r - \vct{e}_{l})\E_{\vct{g}}\left[\eta_{rs} \eta_{rl}\eta_{rr'}\right].
\end{align*}
Therefore,
\begin{align*}
\E_{\vct{g}}\left[\frac{D_{rs}}{Z_r(\mtx{W}(t))}\right] 
&= (\vct{e}_r - \vct{e}_{s})^\intercal \mtx{A}(t)(\vct{e}_r - \vct{e}_{s})\E_{\vct{g}}\left[\eta_{rs}\right] -  \half (\vct{e}_r - \vct{e}_s)^\intercal\mtx{B}(t)(\vct{e}_r - \vct{e}_s)\E_{\vct{g}}\left[\eta_{rs}\right] \\
&~~  - \sum_{l=1}^d (\vct{e}_r - \vct{e}_{l})^\intercal \mtx{A}(t)(\vct{e}_r - \vct{e}_{s})~\E_{\vct{g}}\left[\eta_{rs}\eta_{rl}\right].
\end{align*}
Since $\mtx{A}(t) + \mtx{A}(t)^\intercal = \mtx{B}(t)$ (identity~\eqref{a_plus_at_eq_b}), the first two terms in the above expression cancel each other, and we are left with
\[\E_{\vct{g}}\left[\frac{D_{rs}}{Z_r(\mtx{W}(t))}\right] = - \sum_{l=1}^d (\vct{e}_r - \vct{e}_{l})^\intercal \mtx{A}(t)(\vct{e}_r - \vct{e}_{s})~\E_{\vct{g}}\left[\eta_{rs}\eta_{rl}\right].\]
On the other hand, using the identity~\eqref{a_plus_at_eq_b} again,
\begin{align*}
\E_{\vct{g}}\left[\vct{\eta}_r(\mtx{W}(t))_s \frac{D_{rl}}{Z_r(\mtx{W}(t))}\right] 
&= \big((\vct{e}_r - \vct{e}_{l})^\intercal \mtx{A}(t)(\vct{e}_r - \vct{e}_{l}) + (\vct{e}_r - \vct{e}_{s})^\intercal \mtx{A}(t)(\vct{e}_r - \vct{e}_{l})\big) \E_{\vct{g}}\left[\eta_{rs}\eta_{rl}\right] \\
&~~-\half (\vct{e}_r - \vct{e}_l)^\intercal\mtx{B}(t)(\vct{e}_r - \vct{e}_l)\E_{\vct{g}}\left[\eta_{rs}\eta_{rl}\right]\\
&~~ - 2\sum_{r'=1}^d (\vct{e}_r - \vct{e}_{r'})^\intercal \mtx{A}(t)(\vct{e}_r - \vct{e}_{l})\E_{\vct{g}}\left[\eta_{rs} \eta_{rl}\eta_{rr'}\right]\\
&= (\vct{e}_r - \vct{e}_{s})^\intercal \mtx{A}(t)(\vct{e}_r - \vct{e}_{l}) \E_{\vct{g}}\left[\eta_{rs}\eta_{rl}\right] \\
&~~ - 2\sum_{r'=1}^d (\vct{e}_r - \vct{e}_{r'})^\intercal \mtx{A}(t)(\vct{e}_r - \vct{e}_{l})\E_{\vct{g}}\left[\eta_{rs} \eta_{rl}\eta_{rr'}\right].
\end{align*} 
Now, using the above two formulas, and recalling~\eqref{derivative_eta}, we have 
\begin{align*}
\frac{\mathrm{d}}{\mathrm{d}t}\E\left[\vct{\eta}_r(\mtx{W}(t))_s\right] &= - \sum_{l=1}^d (\vct{e}_r - \vct{e}_{l})^\intercal \mtx{A}(t)(\vct{e}_r - \vct{e}_{s})\E_{\vct{g}}\left[\eta_{rs}\eta_{rl}\right] \\
&~~ -\sum_{l=1}^d (\vct{e}_r - \vct{e}_{s})^\intercal \mtx{A}(t)(\vct{e}_r - \vct{e}_{l}) \E_{\vct{g}}\left[\eta_{rs}\eta_{rl}\right]\\
&~~ +  2\sum_{l=1}^d\sum_{r'=1}^d (\vct{e}_r - \vct{e}_{r'})^\intercal \mtx{A}(t)(\vct{e}_r - \vct{e}_{l})\E_{\vct{g}}\left[\eta_{rs} \eta_{rl}\eta_{rr'}\right].
\end{align*}
Using identity~\eqref{a_plus_at_eq_b}, the sum of the first two terms in the above expression is 
\begin{align*}
&-\sum_{l=1}^d (\vct{e}_r - \vct{e}_{s})^\intercal \mtx{B}(t)(\vct{e}_r - \vct{e}_{l}) \E_{\vct{g}}\left[\eta_{rs}\eta_{rl}\right],\\
&=-\sum_{l,r'=1}^d (\vct{e}_r - \vct{e}_{s})^\intercal \mtx{B}(t)(\vct{e}_r - \vct{e}_{l}) \E_{\vct{g}}\left[\eta_{rs}\eta_{rl}\eta_{rr'}\right],
\end{align*}
where we used the fact $\sum_{r'} \eta_{rr'} = 1$ in the last expression. 
Similarly, the third term is equal to
\[\sum_{l,r'=1}^d (\vct{e}_r - \vct{e}_{r'})^\intercal \mtx{B}(t)(\vct{e}_r - \vct{e}_{l})\E_{\vct{g}}\left[\eta_{rs}\eta_{rl} \eta_{rr'}\right].\]
Therefore we obtain
\begin{align*}
\frac{\mathrm{d}}{\mathrm{d}t}\E\left[\vct{\eta}_r(\mtx{W}(t))_s\right] 
&= \sum_{l,r'=1}^d (\vct{e}_r - \vct{e}_{r'})^\intercal \mtx{B}(t)(\vct{e}_s - \vct{e}_{l})\E_{\vct{g}}\left[\eta_{rs}\eta_{rl} \eta_{rr'}\right].
\end{align*}  
The expression we just obtained does not appear to be symmetric in the indices $(r,s)$, but it does become symmetric when multiplied by $\pi_r$, thanks to the following identity:
\begin{lemma}\label{symmetrization}
Recall the definition of $\eta_{rs}$ from~\eqref{eta}. For all $r,s,l\in \{1,\cdots,d\}$ we have
\[\pi_r \E_{\vct{g}}\left[\eta_{rs}\eta_{rl}\right] = \sum_{l'=1}^d \pi_{l'} \E_{\vct{g}}\left[\eta_{l'r}\eta_{l's}\eta_{l'l}\right].\]
\end{lemma}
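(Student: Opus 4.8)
The plan is to reduce both sides to explicit Gaussian integrals and relate them by a single translation of the integration variable, in exactly the spirit of the change-of-variables argument already used to prove Proposition~\ref{Nishimori}. The first step is to record a clean form for $\eta_{rs}$. Writing $\vct{\delta}_{rs} = \mtx{X}^{-1/2}(\vct{e}_r - \vct{e}_s)$ and completing the square in the exponent of~\eqref{eta}, the factor $\exp(\half\twonorm{\vct{g}}^2)$ cancels between numerator and denominator, so that
\[\eta_{rs} = \frac{\pi_s\exp(-\half\twonorm{\vct{g}+\vct{\delta}_{rs}}^2)}{\sum_{l''}\pi_{l''}\exp(-\half\twonorm{\vct{g}+\vct{\delta}_{rl''}}^2)}.\]
The two algebraic facts I will use repeatedly are the antisymmetry $\vct{\delta}_{rs} = -\vct{\delta}_{sr}$ and the cocycle relation $\vct{\delta}_{l's} - \vct{\delta}_{l'r} = \vct{\delta}_{rs}$, both immediate from the linearity of $\mtx{X}^{-1/2}$.

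With this notation the left-hand side is
\[\pi_r\,\E_{\vct{g}}[\eta_{rs}\eta_{rl}] = \pi_r\pi_s\pi_l\int \frac{\exp(-\half\twonorm{\vct{g}+\vct{\delta}_{rs}}^2)\exp(-\half\twonorm{\vct{g}+\vct{\delta}_{rl}}^2)}{\big(\sum_{l''}\pi_{l''}\exp(-\half\twonorm{\vct{g}+\vct{\delta}_{rl''}}^2)\big)^2}\,\frac{e^{-\half\twonorm{\vct{g}}^2}}{(2\pi)^{d/2}}\,\mathrm{d}\vct{g},\]
and each summand of the right-hand side has the analogous form with reference index $l'$, three exponential factors (from $\eta_{l'r}\eta_{l's}\eta_{l'l}$) in the numerator and the cube of the partition sum in the denominator. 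I would then work from the right-hand side: in the $l'$-th summand apply the shift $\vct{g}+\vct{\delta}_{l'r}\mapsto\vct{g}$, a pure translation of unit Jacobian. By the cocycle relation every offset $\vct{\delta}_{l'c}$ is carried to $\vct{\delta}_{rc}$, so the reference index is transported uniformly from $l'$ to $r$; in particular the whole denominator becomes the $r$-referenced partition sum, and the numerator factors attached to $s$ and $l$ become $\exp(-\half\twonorm{\vct{g}+\vct{\delta}_{rs}}^2)\exp(-\half\twonorm{\vct{g}+\vct{\delta}_{rl}}^2)$. The third numerator factor $\exp(-\half\twonorm{\vct{g}+\vct{\delta}_{l'r}}^2)$ collapses to $e^{-\half\twonorm{\vct{g}}^2}$, while the Gaussian density picks up the leftover weight $\exp(-\half\twonorm{\vct{g}-\vct{\delta}_{l'r}}^2)=\exp(-\half\twonorm{\vct{g}+\vct{\delta}_{rl'}}^2)$, using antisymmetry.

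The payoff is that after the shift the only remaining $l'$-dependence is the factor $\pi_{l'}\exp(-\half\twonorm{\vct{g}+\vct{\delta}_{rl'}}^2)$, so summing over $l'$ collapses it to $\sum_{l'}\pi_{l'}\exp(-\half\twonorm{\vct{g}+\vct{\delta}_{rl'}}^2)$, which is exactly one copy of the $r$-referenced partition sum. This cancels one power of the denominator, dropping it from a cube to a square, and leaves precisely the integral displayed above for the left-hand side. I expect the only delicate point to be bookkeeping: verifying that all three numerator offsets and the full denominator transform correctly under the single translation, and that the three exponential factors redistribute as two surviving numerator terms plus one factor that is absorbed into the summed partition function. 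No analytic subtlety arises, since the change of variables is a translation and every integrand is an integrable Gaussian quotient, so once the offsets are matched the identity follows by inspection.
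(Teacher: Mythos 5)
Your proposal is correct and follows essentially the same route as the paper's own proof: both express the right-hand side as an explicit Gaussian integral with the cubed partition sum in the denominator, apply the translation $\vct{g}+\vct{\delta}_{l'r}\mapsto\vct{g}$ to transport the reference index from $l'$ to $r$, and then sum over $l'$ so that the leftover factor $\pi_{l'}\exp(-\half\twonorm{\vct{g}+\vct{\delta}_{rl'}}^2)$ cancels one power of the denominator. The bookkeeping of offsets you flag as the only delicate point works out exactly as you describe.
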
   
Using the above, we get
\begin{align*}
\frac{\mathrm{d}}{\mathrm{d}t}f(\mtx{W}(t))_{rs} &= -\pi_r\frac{\mathrm{d}}{\mathrm{d}t}\E\left[\vct{\eta}_r(\mtx{W}(t))_s\right],\\ 
&= -\sum_{l,l',r'=1}^d \pi_{l'}(\vct{e}_r - \vct{e}_{r'})^\intercal \mtx{B}(t)(\vct{e}_s - \vct{e}_{l})\E_{\vct{g}}\left[\eta_{l'r}\eta_{l's} \eta_{l'l}\eta_{l'r'}\right],\\
&= -\sum_{l'=1}^d \pi_{l'}  \E_{\vct{g}}\left[\eta_{l'r}\eta_{l's} \cdot (\vct{e}_r - \vct{\eta}_{l'})^\intercal \mtx{B}(t)(\vct{e}_s - \vct{\eta}_{l'}) \right].
\end{align*}
This implies that for all $\vct{z} \in \R^d$
\begin{align*}
\vct{z}^\intercal \frac{\mathrm{d}}{\mathrm{d}t}f(\mtx{W}(t))\vct{z} 
&= \sum_{r,s=1}^d \frac{\mathrm{d}}{\mathrm{d}t}f(\mtx{W}(t))_{rs} z_rz_s,\\
&= -\sum_{l'=1}^d \pi_{l'} \E_{\vct{g}}\left[(\vct{z}\odot\vct{\eta}_{l'} - (\vct{z}^\intercal \vct{\eta}_{l'})\vct{\eta}_{l'})^\intercal \mtx{B}(t) (\vct{z}\odot\vct{\eta}_{l'} - (\vct{z}^\intercal \vct{\eta}_{l'})\vct{\eta}_{l'}) \right],
\end{align*}
where $\odot$ denote the entry-wise product of two vectors.
Since $\mtx{B}(t) = - \mtx{W}(t)^{-1}\cdot \frac{\mathrm{d}}{\mathrm{d}t}\mtx{W}(t) \cdot\mtx{W}(t)^{-1}$ and $\frac{\mathrm{d}}{\mathrm{d}t}\mtx{W}(t) = \mtx{Y} - \mtx{X} \succeq \mtx{0}$, we see that
\[\vct{z}^\intercal \frac{\mathrm{d}}{\mathrm{d}t}f(\mtx{W}(t))\vct{z}
= \sum_{l'=1}^d \pi_{l'} \E_{\vct{g}}\left[\twonorm{ (\mtx{Y} - \mtx{X})^{\half} \mtx{W}(t)^{-1} \left(\vct{z}\odot\vct{\eta}_{l'} - (\vct{z}^\intercal \vct{\eta}_{l'})\vct{\eta}_{l'}\right)}^2 \right] \ge 0,\]
hence concluding the general argument. It now remains to prove Lemma~\ref{symmetrization}. 

\vspace{.3cm}
 \noindent\begin{proofof}{Lemma~\ref{symmetrization}}
The proof relies on a simple change of variables in the expectation. Using~\eqref{eta}, and letting $\vct{\delta}_{rs} = \mtx{W}(t)^{-1/2}(\vct{e}_r - \vct{e}_{s})$ for all $r,s$, we have 
\begin{align*}
\E_{\vct{g}}\left[\eta_{l'r}\eta_{l's}\eta_{l'l}\right] &= \pi_r \pi_s \pi_l \E_{\vct{g}}\left[\frac{e^{-\vct{g}^\intercal (\vct{\delta}_{l'r}+\vct{\delta}_{l's}+\vct{\delta}_{l'l})-\half\twonorm{\vct{\delta}_{l'r}}^2-\half\twonorm{\vct{\delta}_{l's}}^2-\half\twonorm{\vct{\delta}_{l'l}}^2}}{\left(\sum_{r'=1}^d \pi_{r'} e^{-\vct{g}^\intercal  \vct{\delta}_{l'r'}-\half\twonorm{\vct{\delta}_{l'r'}}^2}\right)^3}\right]\\
&= \pi_r \pi_s \pi_l \E_{\vct{g}}\left[\frac{e^{-\half \twonorm{\vct{g} + \vct{\delta}_{l'r}}^2 -\half \twonorm{\vct{g} + \vct{\delta}_{l's}}^2 -\half \twonorm{\vct{g} + \vct{\delta}_{l'l}}^2}}{\left(\sum_{r'=1}^d \pi_{r'} e^{-\half \twonorm{\vct{g} + \vct{\delta}_{l'r'}}^2}\right)^3}\right]\\
&= \pi_r \pi_s \pi_l \int_{\R^d} \frac{e^{-\half \twonorm{\vct{g} + \vct{\delta}_{l'r}}^2 -\half \twonorm{\vct{g} + \vct{\delta}_{l's}}^2 -\half \twonorm{\vct{g} + \vct{\delta}_{l'l}}^2}}{\left(\sum_{r'=1}^d \pi_{r'} e^{-\half \twonorm{\vct{g} + \vct{\delta}_{l'r'}}^2}\right)^3} ~\frac{e^{-\half \twonorm{\vct{g}}^2}}{(2\pi)^{d/2}} ~\mathrm{d}\vct{g}.
\end{align*}
We make the change of variables $\vct{g} + \vct{\delta}_{l'r} \to \vct{g}$. The term $\twonorm{\vct{g} + \vct{\delta}_{l'r}}^2$ becomes $\twonorm{\vct{g}}^2$, $\twonorm{\vct{g} + \vct{\delta}_{l's}}^2$ becomes $\twonorm{\vct{g} + \vct{\delta}_{rs}}^2$,  $\twonorm{\vct{g} + \vct{\delta}_{l'l}}^2$ becomes $\twonorm{\vct{g} + \vct{\delta}_{rl}}^2$, $\twonorm{\vct{g}}^2$ becomes $\twonorm{\vct{g} + \vct{\delta}_{rl'}}^2$, and $\twonorm{\vct{g} + \vct{\delta}_{l'r'}}^2$ becomes $\twonorm{\vct{g} + \vct{\delta}_{rr'}}^2$ in the denominator. The first term will assume the role of the Gaussian density, and we rewrite the above as an expectation under the Gaussian distribution:
\[\pi_r \pi_s \pi_l \E_{\vct{g}}\left[\frac{e^{-\half \twonorm{\vct{g} + \vct{\delta}_{rs}}^2 -\half \twonorm{\vct{g} + \vct{\delta}_{rl}}^2 -\half \twonorm{\vct{g} + \vct{\delta}_{rl'}}^2}}{\left(\sum_{r'=1}^d \pi_{r'} e^{-\half \twonorm{\vct{g} + \vct{\delta}_{rr'}}^2}\right)^3}\right].\] 
If the above expression is multiplied by $\pi_{l'}$ and summed over all $l'$, the third term in the numerator cancels with one power of the denominator, and the result is 
\[\pi_r \pi_s \pi_l \E_{\vct{g}}\left[\frac{e^{-\half \twonorm{\vct{g} + \vct{\delta}_{rs}}^2 -\half \twonorm{\vct{g} + \vct{\delta}_{rl}}^2}}{\left(\sum_{r'=1}^d \pi_{r'} e^{-\half \twonorm{\vct{g} + \vct{\delta}_{rr'}}^2}\right)^2}\right] = \pi_r \E_{\vct{g}}\left[ \eta_{rs} \eta_{rl}\right].\] 
\end{proofof}

\subsection{Proof of Proposition~\ref{asymptotics_large_d}}
For $x >0$, we let
\[\phi_d(x)  := \E_{\vct{g}}\left[\frac{x^2 \sum_{r=2}^d e^{g_r \sqrt{\log(d-1)} x }}{e^{g_1\sqrt{\log(d-1)}x} \cdot (d-1)^{x^2} + \sum_{r=2}^d e^{g_r\sqrt{\log(d-1)} x }}\right].\]
By symmetry in the variables $g_r$, $r \ge 2$, we can see that
\[\phi_d\left(\frac{x}{\sqrt{\log(d-1)}}\right) =  \frac{d-1}{\log (d-1)}\E_{\vct{g}}\left[\frac{x^2\exp(g_2 x )}{\exp(g_1x + x^2) + \sum_{r=2}^d \exp(g_r x )}\right].\]
Our claim reduces to exhibiting upper and lower bounds on $\sup_{x>0} \phi_d(x)$ which are asymptotically independent of $d$.
We start with the upper bound. Since, the function $x \to \frac{x}{1+x}$ is concave on $\Rp$, we have by Jensen's inequality,
\begin{align*}
\phi_d(x)  &\le \E_{g_1}\left[\frac{x^2 \sum_{r=2}^d  \E_{g_r:r\ge 2}\left[e^{g_r \sqrt{\log(d-1)} x }\right]}{e^{g_1\sqrt{\log(d-1)}x} \cdot (d-1)^{x^2} + \sum_{r=2}^d   \E_{g_r:r\ge 2}\left[e^{g_r \sqrt{\log(d-1)} x }\right]}\right],\\
&= \E_{g_1}\left[\frac{x^2 (d-1)^{1+x^2/2} }{e^{g_1\sqrt{\log(d-1)}x} \cdot (d-1)^{x^2} + (d-1)^{1+x^2/2} }\right].
\end{align*}
We split the analysis into two cases: $x \le \sqrt{2}+ \epsilon$, $x > \sqrt{2}+ \epsilon$ for some $\epsilon >0$. We see that $\phi_d(x)  \le x^2$ for all $x > 0$. If $x \le \sqrt{2}+ \epsilon$, then $\phi_d(x) \le (\sqrt{2}+ \epsilon)^2$. For the remaining case, let $\alpha = \alpha(\epsilon) >0$ such that $x^2/2 - \alpha x -1 > 0$ for all $x > \sqrt{2}+\epsilon$. One can find such an $\alpha$ as the solution to the equation $\alpha + \sqrt{\alpha^2 + 2} = \sqrt{2}+ \epsilon$. 
Next, we let $\mathcal{E}$ be the event that $g_1 \le \frac{1-x^2/2+\alpha x}{x} \sqrt{\log(d-1)}$, and write
\begin{align*} 
\phi_{d}(x) &\le \E_{g_1}\left[\frac{x^2}{(d-1)^{x^2/2-1} \cdot e^{g_1\sqrt{\log(d-1)}x} + 1} \bigg| \bar{\mathcal{E}}\right] \Pr(\bar{\mathcal{E}})\\
&~~~+  \E_{g_1}\left[\frac{x^2}{(d-1)^{x^2/2-1} \cdot e^{g_1\sqrt{\log(d-1)}x} + 1} \bigg| \mathcal{E}\right] \Pr(\mathcal{E}),
\end{align*}  
Under $\bar{\mathcal{E}}$, we have $-x^2/2 +1 - g_1 x \sqrt{\log(d-1)} \le -\alpha x$, and the first term in the above expression is upper bounded by
\[x^2 (d-1)^{-\alpha x}.\]
On the other hand, we upper bound the conditional expectation in the second term by $x^2$, and use the fact that $\Pr(\mathcal{E}) \le (d-1)^{-(1-x^2/2+\alpha x)^2/(2x^2)}$. We obtain the upper bound
\[\phi_d(x) \le x^2 \left((d-1)^{-\alpha x} + (d-1)^{-(1-x^2/2+\alpha x)^2/(2x^2)}\right),\]
which decays to $0$ as $d \to \infty$ \underline{uniformly in $x \ge \sqrt{2}+\epsilon$}. This proves that 
\[\sup_{x >0}\phi_{d}(x) \le (\sqrt{2}+\epsilon)^2\]
 for all $d$ sufficiently large.  

Now we turn our attention on the lower bound. Since the function $x \to \frac{x}{1+x}$ is increasing, we have
\[\phi_d(x) \ge \E_{\vct{g}}\left[\frac{x^2  e^{\max_{r \ge 2}g_r \sqrt{\log(d-1)} x }}{e^{g_1\sqrt{\log(d-1)}x} \cdot (d-1)^{x^2} +e^{\max_{r \ge 2}g_r\sqrt{\log(d-1)} x }}\right].\]  
The maximum of finitely many Gaussians concentrates in a sub-Gaussian way: for all $t \ge 0$,
\[\Pr\left(\max_{r\ge 2} g_r - \E[\max_{r\ge 2} g_r] \le - t \right) \le e^{-t^2/2}.\]
We write $ \E[\max_{r\ge 2} g_r] = c_d \sqrt{\log(d-1)}$; it is known that $c_d = \sqrt{2}(1-o_d(1))$. Letting $t = \epsilon c_d \sqrt{\log(d-1)}$ for some $\epsilon>0$, we have 
\[\phi_d(x) \ge \E_{g_1}\left[\frac{x^2  (d-1)^{(1-\epsilon)c_d x}}{e^{g_1\sqrt{\log(d-1)}x} \cdot (d-1)^{x^2} + (d-1)^{(1-\epsilon)c_d x}}\right] \cdot \left(1-(d-1)^{-\epsilon^2 c_d^2/2}\right).\]
We plug the value $x = (1-\epsilon)c_d$ in the right hand side, and deduce
\[\sup_{x>0}\phi_d(x) \ge \E_{g_1}\left[\frac{(1-\epsilon)^2c_d^2}{e^{g_1(1-\epsilon)c_d \sqrt{\log(d-1)}} + 1}\right] \cdot \left(1-(d-1)^{-\epsilon^2 c_d^2/2}\right).\]
We see that the above converges to the value $(1-\epsilon)^2$ as $d\to \infty$.


\section{Deriving the Approximate Message Passing equations}
 \label{sxn:AMP}

 We divide the derivation of the AMP equations into two parts. First, we write down the Belief Propagation (BP) equations, and simplify them to the ``relaxed'' BP equations. Then, we show how to transform the relaxed BP equations into the AMP iteration.

\subsection{From Belief Propagation (BP) to Relaxed BP} 
The factor graph $G$ of our model consists of a bipartite graph with the variables $\{\vct{x}_i,~ 1\le i \le n\}$ on one side of the bipartition and the measurements $\{\vct{h}_a,~ 1\le a \le m\}$ on the other side. A measurement (or check) node $\vct{h}_a$ is connected to $k = \alpha n$ variables nodes in expectation chosen uniformly at random (i.e. those such that $A_{ai}=1$) from all the variable nodes. 

We rescale the elements of the sensing matrix $\mtx{A}$ such that $A_{ai}$ has expectation 0 and variance $\frac{\alpha(1-\alpha)}{n}$. This can be done by subtracting the vector $\alpha n\vct{\pi}$ from each observation $\vct{h}_a$ and dividing everything by $\sqrt{n}$. Hence, we let 
\[\widebar{\vct{h}}_a := (\vct{h}_a - \alpha n \vct{\pi})/\sqrt{n},\] 
and 
\[\widebar{\mtx{A}} = (\mtx{A} - \alpha \one_m \one_n^\intercal)/\sqrt{n}.\] 
The linear system $\vct{h}_a = \sum_{j=1}^n A_{aj} \vct{x}^*_j$ is equivalent to $\bar{\vct{h}}_a = \sum_{j=1}^n \widebar{A}_{aj}\vct{x}^*_j$. 

We now write the messages of the Belief Propagation algorithm. Let $\vec{E}$ be the set of directed edges of the factor graph with all possible directions, i.e., each edge $(i,a)$ is endowed with both directions $i\to a$ and $a \to i$. Note that $|\vec{E}| = 2 k m$. The message passing procedure consists of iterating a map $\BP: \left(\Delta^{d-1}\right)^{\vec{E}} \to \left(\Delta^{d-1}\right)^{\vec{E}}$ from some initial guess until (possible) convergence. 
For convenience, for all $r \in \{1,\cdots,d\}$, any set of messages $ \vct{m} = \{\vct{m}_{i\to a}~,~\vct{m}_{a\to i}~:~  A_{ai}=1\} \in \left(\Delta^{d-1}\right)^{\vec{E}}$ on $G$, and any directed edge $a \to i$, we denote the $r$th coordinate of the $d$-dimensional message $\vct{m}_{a \to i}$ by $\vct{m}_{a \to i}(\vct{e}_r)$ instead of $(\vct{m}_{a \to i})_r$, and similarly for the coordinates of $\vct{m}_{a\to i}$.
With this notation in hand, the map $\BP$ is defined as follows: We consider a prior distribution on the messages that agrees with the category proportions in the planted solution $\tau^*$, i.e., for every $i$ and $r$,
\[P(\vct{x}_i = \vct{e}_r) = \pi_r\]
This is our ``uninformative" prior: under lack of any further information, the algorithm predicts that $\vct{x}_i = \vct{e}_r$ with probability $\pi_r$ for all $i$ and $r$. 
Then for all $\vct{x} \in \{\vct{e}_1,\cdots,\vct{e}_d\}$,
\begin{align}
 \BP(\vct{m})_{i \to a}(\vct{x}) &:= \frac{1}{Z_{i \to a}(\vct{m})} P(\vct{x}) \prod_{b \in \partial i \backslash a} \vct{m}_{b \to i}(\vct{x}),\label{var_comp}\\
 \BP(\vct{m})_{a \to i}(\vct{x}) &:=  \frac{1}{Z_{a \to i}(\vct{m})}\sum_{\vct{x}_j \in \{\vct{e}_1,\cdots,\vct{e}_d\} \atop j \in \partial a \backslash i} \indi \left\{\bar{\vct{h}}_a = \widebar{A}_{ai}\vct{x}+\sum_{ j  \neq i} \widebar{A}_{aj}\vct{x}_j \right\} \prod_{j \in \partial a \backslash i} \vct{m}_{j \to a}(\vct{x}_j),\label{check_comp}
 \end{align}
 with $Z_{i \to a}(\vct{m})$ and $Z_{a \to i}(\vct{m})$ are the normalizing factors such that $\sum_{r=1}^d\BP(\vct{m})_{i \to a}(\vct{e}_r) = \sum_{r=1}^d\BP(\vct{m})_{a \to i}(\vct{e}_r) = 1$. If $G$ was a tree, the map $\BP$ would compute the exact posterior distribution of the category assignments $\{\vct{x}_i~:~ 1 \le i \le n\}$ given the observations $\{\vct{h}_a~:~ 1 \le a \le m\}$. In our case this will only be true when $m/n$ is large enough.   

We see that the second equation above has a sum involving $d^{k-1}$ terms, which makes the execution of the BP algorithm intractable. We derive a set of \emph{relaxed} Belief Propagation messages from the above that only require linear-algebraic computations of size polynomial in $n$ and $m$. Later, we further simplify these equations by leveraging the fact that our factor graph is random and dense, to finally arrive at the Approximate Message Passing iteration.

We now proceed by replacing the indicator in~\eqref{check_comp} by a Gaussian with small variance $\sigma>0$, which we then linearize by writing it as the Fourier transform of the standard Gaussian measure (this is also known as the Hubbard-Stratonovich transformation):     
 \begin{align*}
\BP_{\sigma}(\vct{m})_{a \to i}(\vct{x}) &:= \frac{1}{Z_{a \to i}(\vct{m})}
\sum_{ \vct{x}_j \in \{\vct{e}_1,\cdots,\vct{e}_d\} \atop j \in \partial a \backslash i} 
\exp
\bigg(-\bigg\|\bar{\vct{h}}_a - \sum_{j =1}^n \widebar{A}_{aj}\vct{x}_j\bigg\|_{\ell_2}^2\bigg/2\sigma^2 \bigg) 
\prod_{j \in \partial a \backslash i} \vct{m}_{j \to a}(\vct{x_j}), \\
&\propto 
\sum_{\vct{x}_j \in \{\vct{e}_1,\cdots,\vct{e}_d\} \atop j \in \partial a \backslash i} 
\int_{\R^d} 
\exp\bigg( \imnb \sigma^{-1} \vct{g}^\intercal \bigg(\bar{\vct{h}}_a - \sum_{j =1}^n \widebar{A}_{aj}\vct{x}_j\bigg) \bigg) 
\prod_{j \in \partial a \backslash i} \vct{m}_{j \to a}(\vct{x}_j) \gamma_d(\mathrm{d}\vct{g}),\\
&\text{where we let $\gamma_d$ refer to the standard $d$-dimensional Gaussian measure.}\\
&\propto
\int_{\R^d} 
\exp\bigg(\imnb \sigma^{-1} \vct{g}^\intercal \big(\bar{\vct{h}}_a - \widebar{A}_{ai} \vct{x}\big) \bigg) \\
& \quad\quad\quad\times 
\prod_{j \in \partial a \backslash i}
\bigg[ \sum_{\vct{x}_j \in \{\vct{e}_1,\cdots,\vct{e}_d\}} \exp\big(-\imnb \sigma^{-1} \widebar{A}_{aj} \vct{g}^\intercal \vct{x}_j \big) ~\vct{m}_{j \to a}(\vct{x}_j) \bigg]
\gamma_d(\mathrm{d}\vct{g}).
\end{align*}

Now, observe that the exponentials in the sum above involve the terms $\widebar{A}_{aj}$ which are of order $1/\sqrt{n}$. By expanding the Taylor series of the exponential, one can show 
\begin{align*}
\sum_{\vct{x}_j \in \{\vct{e}_1,\cdots,\vct{e}_d\}} \exp(-\imnb \sigma^{-1}  \widebar{A}_{aj} \vct{g}^\intercal \vct{x}_j )~\vct{m}_{j \to a}(\vct{x}_j) &=
\sum_{r =1}^d \exp(-\imnb \sigma^{-1}  \widebar{A}_{aj} \vct{g}^\intercal \vct{e}_r )~\vct{m}_{j \to a}(\vct{e}_r)\\
&= 
 \exp\bigg(-\imnb \sigma^{-1}  \widebar{A}_{aj} \vct{g}^\intercal \vct{m}_{j \to a} - \half \sigma^{-2} \widebar{A}_{aj}^2 \vct{g}^\intercal \mtx{B}_{j \to a} \vct{g}\bigg) \\
 &\hspace{1cm}+ \bigo(1/n^{3/2}),
\end{align*}
where 
\begin{equation}\label{B_ja}
\mtx{B}_{j \to a}  = \Diag( \vct{m}_{j \to a}) - \vct{m}_{j \to a}\vct{m}_{j \to a}^\intercal.
\end{equation}
Plugging the above expression into the message, we get
\begin{align*}
\BP_{\sigma}(\vct{m})_{a \to i}(\vct{x}) &\approx \frac{1}{Z_{a \to i}(\vct{m})}  
\int_{\R^d} 
\exp\bigg(\imnb \sigma^{-1}  \vct{g}^\intercal \big(\bar{\vct{h}}_a - \widebar{A}_{ai} \vct{x}\big) \bigg) \\
& \quad\quad\quad\times 
\prod_{j \in \partial a \backslash i}
\exp\bigg(-\imnb \sigma^{-1}  \widebar{A}_{aj} \vct{g}^\intercal \vct{m}_{j \to a} - \half \sigma^{-2} \widebar{A}_{aj}^2 \vct{g}^\intercal \mtx{B}_{j \to a} \vct{g}\bigg) 
\gamma_d(\mathrm{d}\vct{g}),\\
 &=  \frac{1}{Z_{a \to i}(\vct{m})}  
 \int_{\R^d} 
\exp\bigg(
\imnb \sigma^{-1}  \vct{g}^\intercal \big(\bar{\vct{h}}_a - \widebar{A}_{ai} \vct{x}\big) \\
&\hspace{2.5cm}
-\sum_{j \in \partial a \backslash i} \imnb \sigma^{-1}  \widebar{A}_{aj} \vct{g}^\intercal \vct{m}_{j \to a} 
- \half \sum_{j \in \partial a \backslash i} \sigma^{-2} \widebar{A}_{aj}^2 \vct{g}^\intercal \mtx{B}_{j \to a} \vct{g}
\bigg) 
\gamma_d(\mathrm{d}\vct{g}).
\end{align*}
We denote the ``average" message and variance that appear in the formula above by
\begin{align}
\vct{\omega}_{a \to i}  &:= \sum_{j \in \partial a \backslash i} \widebar{A}_{aj} \vct{m}_{j \to a},\label{avg_check_to_var}\\
\mtx{V}_{a \to i} &:= \sum_{j \in \partial a \backslash i} \widebar{A}_{aj}^2 \mtx{B}_{j \to a}. \label{variance_check_to_var}
\end{align}
The exponentiated term in the integrand, when combined with the contribution of the Gaussian density, becomes 
\begin{align*}
\imnb \sigma^{-1}\vct{g}^\intercal \big(\bar{\vct{h}}_a - \widebar{A}_{ai} \vct{x} - \vct{\omega}_{a \to i} \big)& - \half \sigma^{-2}\vct{g}^\intercal \mtx{V}_{a \to i} \vct{g}
-\half \twonorm{\vct{g}}^2\\
&= \imnb \sigma^{-1}\vct{g}^\intercal \big(\bar{\vct{h}}_a - \widebar{A}_{ai} \vct{x} - \vct{\omega}_{a \to i} \big)- \half \vct{g}^\intercal (\sigma^{-2}\mtx{V}_{a \to i} + \mtx{I})\vct{g}.
\end{align*}
 Now, computing the integral yields
 \[\BP_{\sigma}(\vct{m})_{a \to i}(\vct{x}) \propto \exp\left(- \frac{1}{2\sigma^2} \twonorm{(\sigma^{-2}\mtx{V}_{a \to i} + \mtx{I})^{-\half}\big(\bar{\vct{h}}_a - \widebar{A}_{ai} \vct{x} - \vct{\omega}_{a \to i} \big)}^2 \right),  \] 
 and letting $\sigma \to 0$ yields
 \begin{align*}
 \BP(\vct{m})_{a \to i}(\vct{x}) \propto \exp\left(- \half \twonorm{\mtx{V}_{a \to i}^{-\half}\big(\bar{\vct{h}}_a - \widebar{A}_{ai} \vct{x} - \vct{\omega}_{a \to i} \big)}^2 \right).  
 \end{align*}
On the other hand, by injecting the above formula into the messages from-variable-to-check node~\eqref{var_comp}, the latter can be written as
\begin{align}
 \BP(\vct{m})_{i \to a }(\vct{x}) &\propto P(\vct{x}) \exp \left( \sum_{b \in \partial i \backslash a} -\half \twonorm{ \mtx{V}_{b \to i}^{-1/2} (\bar{\vct{h}}_b - \widebar{A}_{bi}\vct{x} - \vct{\omega}_{b\to i}) }^2 \right), \nonumber\\
 &\propto P(\vct{x}) \exp\left( - \half \vct{x}^\intercal \left( \sum_{b \in \partial i \backslash a} \widebar{A}_{bi}^2 \mtx{V}_{b \to i}^{-1} \right) \vct{x} + \vct{x}^\intercal \left(\sum_{b \in \partial i \backslash a} \widebar{A}_{bi}\mtx{V}_{b \to i}^{-1} (\bar{\vct{h}}_b - \vct{\omega}_{b \to i}) \right) \right), \nonumber\\
 &\propto P(\vct{x}) \exp(-(\vct{x} - \vct{z}_{i \to a})^\intercal \mtx{\Sigma}_{i \to a}^{-1} (\vct{x} - \vct{z}_{i \to a})/2), \label{var_comp_simplified}
\end{align}
where we denoted the average message and variance by
\begin{align}
\vct{z}_{i \to a} &= \mtx{\Sigma}_{i \to a} \sum_{b \in \partial i \backslash a} \widebar{A}_{bi}\mtx{V}_{b \to i}^{-1} (\bar{\vct{h}}_b - {\bm \omega_{b \to i}}), \label{avg_var_to_check}\\
\mtx{\Sigma}_{i \to a}^{-1} &:= \sum_{b \in \partial i \backslash a} \widebar{A}_{bi}^2 \mtx{V}_{b \to i}^{-1}.\label{variance_var_to_check}
\end{align}

\noindent The combination of the equations (\ref{B_ja}-\ref{variance_var_to_check}) forms the set of \emph{Relaxed Belief Propagation} (RBP) equations: 
\begin{equation}\label{relaxed_BP}
\begin{cases}
\begin{aligned}
\vct{m}_{i \to a} &= \vct{\eta}(\vct{z}_{i \to a}, \mtx{\Sigma}_{i \to a}),\\
\mtx{B}_{i \to a} &= \Diag(\vct{m}_{i \to a}) - \vct{m}_{i \to a} \vct{m}_{i \to a}^\intercal,\\
\vct{z}_{i \to a} &= \mtx{\Sigma}_{i \to a} \sum_{b \in \partial i \backslash a} \widebar{A}_{bi}\mtx{V}_{b \to i}^{-1} (\bar{\vct{h}}_b - \vct{\omega}_{b \to i}), \\
\mtx{\Sigma}_{i \to a}^{-1} &= \sum_{b \in \partial i \backslash a} \widebar{A}_{bi}^2 \mtx{V}_{b \to i}^{-1},\\
\vct{\omega}_{a \to i}  &= \sum_{j \in \partial a \backslash i} \widebar{A}_{aj} \vct{m}_{j \to a},\\
\mtx{V}_{a \to i} &= \sum_{j \in \partial a \backslash i} \widebar{A}_{aj}^2 \mtx{B}_{j \to a}, 
\end{aligned}
\end{cases}
\end{equation}
with 
\begin{equation}\label{vct_eta}
\vct{\eta}(\vct{z}, \mtx{\Sigma}) := \frac{1}{Z(\vct{z}, \mtx{\Sigma})} \sum_{r=1}^d \pi_r \vct{e}_r \exp\left(-\half(\vct{e}_r - \vct{z})^\intercal \mtx{\Sigma}^{-1}(\vct{e}_r - \vct{z})\right),
\end{equation}
where $Z(\vct{z}, \mtx{\Sigma})$ is the normalization constant so that $\one^\intercal \vct{\eta}(\vct{z}, \mtx{\Sigma}) =1$. 
The complexity of the iterative version of these equations is of order at most $\bigo(d^3 n m)$ which is essentially quadratic in $n$. Next, we further reduce the complexity of the iteration to $\bigo(d^3(n+m))$ by showing that it suffices to track the average of the incoming messages at each node. This is due to the fact that the factor graph is dense and its edges are independent.    

\subsection{From Relaxed BP to Approximate Message Passing} 
Let us now derive the equations of the (more efficient) AMP algorithm. We will define a notion of ``total messages" $\vct{m}_{i}, \mtx{B}_{i}$, $\vct{z}_{i}$, $\mtx{\Sigma}_{i}$, $\vct{\omega}_a$, $\mtx{V}_a$ and relate them to one another. The expressions~\eqref{avg_check_to_var}, \eqref{variance_check_to_var}, \eqref{avg_var_to_check}, and~\eqref{variance_var_to_check} defining $\vct{\omega}_{a \to i}$, $\mtx{V}_{a \to i}$, $\vct{z}_{i \to a}$ and $\mtx{\Sigma}_{i \to a}$ respectively involve sums over all the neighbors of the node sending the message except the node receiving the message. We first define $\vct{\omega}_a$, $\mtx{V}_a$ and $\mtx{\Sigma}_{i}$  by adding this last term: 
\begin{align*}
\vct{\omega}_a^t &:=  \sum_{j \in \partial a} \widebar{A}_{aj} \vct{m}_{j \to a}^t = \vct{\omega}_{a \to i}^t + \widebar{A}_{ai} \vct{m}_{i \to a}^t, \\
\mtx{V}_{a}^t &:= \sum_{j \in \partial a} \widebar{A}_{aj}^2 \mtx{B}_{j \to a}^t = \mtx{V}_{a\to i}^t + \widebar{A}_{ai}^2 \mtx{B}_{i \to a}^t,\\
\left(\mtx{\Sigma}_{i}^t\right)^{-1} &:= \sum_{b \in \partial i } \widebar{A}_{bi}^2 \left(\mtx{V}_{b}^t\right)^{-1}.
\end{align*}
where we introduced a time index $t$ to track the iteration count. 
Now we attempt to find a notion of total message $\vct{z}_i^t$ for $\vct{z}_{i \to a}^t$ such that the obtained set of equations becomes self consistent. Once $\vct{z}_i^t$ is found, then we define $\vct{m}_i^{t+1}$ and $\mtx{B}_i^{t+1}$ as $ \vct{\eta}(\vct{z}_i^t,\mtx{\Sigma}_i^t)$ and $\Diag(\vct{\eta}(\vct{z}_i^t,\mtx{\Sigma}_i^t)) - \vct{\eta}(\vct{z}_i^t,\mtx{\Sigma}_i^t) \vct{\eta}(\vct{z}_i^t,\mtx{\Sigma}_i^t)^\intercal$, respectively.  
Since $\mtx{\Sigma}^t_{i \to a} - \mtx{\Sigma}^t_{i} = \bigo(1/n)$ and $\mtx{V}^t_{a \to i} - \mtx{V}^t_{a} = \bigo(1/n)$, we have using~\eqref{avg_var_to_check}
\begin{align*}
\vct{z}_{i \to a}^t &= \mtx{\Sigma}_{i \to a}^t \cdot \sum_{b \in \partial i \backslash a} \widebar{A}_{bi}\left(\mtx{V}_{b \to i}^t\right)^{-1} (\bar{\vct{h}}_b - \vct{\omega}_{b \to i}^t), \\
&\simeq  \mtx{\Sigma}_{i}^t \cdot \sum_{b \in \partial i \backslash a} \widebar{A}_{bi}\left(\mtx{V}_{b}^t\right)^{-1} (\bar{\vct{h}}_b - \vct{\omega}_{b \to i}^t).
\end{align*}
Substituting the expression $\vct{\omega}_{a \to i}^t = \vct{\omega}_{a}^t - \widebar{A}_{ai}\vct{m}_{i \to a}^t$ in the above, we get
\begin{align*}
\vct{z}_{i \to a}^t
&=  \mtx{\Sigma}_{i}^t \cdot \sum_{b \in \partial i \backslash a} \widebar{A}_{bi}\left(\mtx{V}_{b}^t\right)^{-1} (\bar{\vct{h}}_b - \vct{\omega}_{b}^t) + \mtx{\Sigma}_{i}^t \cdot \sum_{b \in \partial i \backslash a} \widebar{A}_{bi}^2 \left(\mtx{V}_{b}^t\right)^{-1} \vct{m}_{i\to b}^t\\
&\simeq  \mtx{\Sigma}_{i}^t \cdot \sum_{b \in \partial i} \widebar{A}_{bi}\left(\mtx{V}_{b}^t\right)^{-1} (\bar{\vct{h}}_b - \vct{\omega}_{b}^t) + \mtx{\Sigma}_{i}^t \cdot \sum_{b \in \partial i } \widebar{A}_{bi}^2 \left(\mtx{V}_{b}^t\right)^{-1} \vct{m}_{i \to b}^t,
\end{align*}
where we also allowed the above sums to run over all neighbors of $i$ since the additional terms are of order $1/\sqrt{n}$ compared to the entire sum which is of order 1.  Now we make the assumption that the messages $\vct{m}^t_{i \to b}$ are approximately equal for all $b \in \partial i$ to a common value $\vct{m}_i^t$, up to error $1/\sqrt{n}$. This assumption is justified by the fact that the graph is dense with equally strong edge weights, so the messages outgoing from every node are equal, up to first order. This simplifies the second term:  
\[\mtx{\Sigma}_{i}^t \cdot \sum_{b \in \partial i } \widebar{A}_{bi}^2 \left(\mtx{V}_{b}^t\right)^{-1} \vct{m}_{i \to b}^t \simeq \mtx{\Sigma}_{i}^t \cdot \sum_{b \in \partial i } \widebar{A}_{bi}^2 \left(\mtx{V}_{b}^t\right)^{-1} \vct{m}_{i}^t = \vct{m}_{i}^t.\]
Based on these approximations, we define 
\[\vct{z}_{i}^t := \mtx{\Sigma}_{i}^t \cdot \sum_{b \in \partial i} \widebar{A}_{bi}\left(\mtx{V}_{b}^t\right)^{-1} (\bar{\vct{h}}_b - \vct{\omega}_{b}^t) + \vct{m}_{i}^t.\]
Now we treat $\vct{\omega}_a^t$. Recall $\vct{\omega}_a^t =  \sum_{j \in \partial a} \widebar{A}_{aj} \vct{m}_{j \to a}^t$,
and $\vct{m}_{j \to a}^{t} = \vct{\eta}(\vct{z}_{j \to a}^{t-1}, \mtx{\Sigma}_{j \to a}^{t-1})$.
We write 
\begin{align*}
\vct{z}_{j \to a}^{t-1} &= \mtx{\Sigma}_{j \to a}^{t-1} \cdot \sum_{b \in \partial j} \widebar{A}_{bj}\left(\mtx{V}_{b \to j}^{t-1}\right)^{-1} (\bar{\vct{h}}_b - \vct{\omega}_{b \to j}^{t-1}) - \mtx{\Sigma}_{j \to a}^{t-1} \cdot \widebar{A}_{aj}\left(\mtx{V}_{a \to j}^{t-1}\right)^{-1} (\bar{\vct{h}}_a - \vct{\omega}_{a \to j}^{t-1}),\\
&\simeq \vct{z}_j^{t-1} - \mtx{\Sigma}_{j \to a}^{t-1} \cdot \widebar{A}_{aj}\left(\mtx{V}_{a}^{t-1}\right)^{-1} (\bar{\vct{h}}_a - \vct{\omega}_{a}^{t-1}).
\end{align*}
The second term is negligible compared to the first one, so we develop a first order Taylor approximation of the function $\vct{\eta}$ in the second term, and obtain
\begin{align*}
\vct{\omega}_a^t &=  \sum_{j \in \partial a} \widebar{A}_{aj} \vct{\eta}(\vct{z}_{j \to a}^{t-1}, \mtx{\Sigma}_{j \to a}^{t-1}), \\
&\simeq  \sum_{j \in \partial a} \widebar{A}_{aj} \left(\vct{\eta}(\vct{z}_{j}^{t-1},\mtx{\Sigma}_{j}^{t-1}) - \frac{\mathrm{d}\vct{\eta}}{\mathrm{d}\vct{z}} (\vct{z}_{j \to a}^{t-1},  \mtx{\Sigma}_{j \to a}^{t-1})  \cdot \mtx{\Sigma}_{j \to a}^{t-1} \cdot\widebar{A}_{aj} (\mtx{V}_a^{t-1})^{-1} (\bar{\vct{h}}_a - \vct{\omega}_a^{t-1})\right),\\
&=  \sum_{j \in \partial a} \widebar{A}_{aj} \vct{m}_{j}^t - \left(\sum_{j \in \partial a} \widebar{A}_{aj}^2 \frac{\mathrm{d}\vct{\eta}}{\mathrm{d}\vct{z}} (\vct{z}_{j \to a}^{t-1},  \mtx{\Sigma}_{j \to a}^{t-1}) \cdot \mtx{\Sigma}_{j \to a}^{t-1}\right)(\mtx{V}_a^{t-1})^{-1}(\bar{\vct{h}}_a - \vct{\omega_a}^{t-1}).
\end{align*}
Based on the expression~\eqref{vct_eta} of $\vct{\eta}$, one can easily check that  
\[\frac{\mathrm{d}\vct{\eta}}{\mathrm{d} \vct{z}}\left(\vct{z}, \mtx{\Sigma}\right) = \left(\Diag(\vct{\eta}(\vct{z}, \mtx{\Sigma})) - \vct{\eta}(\vct{z}, \mtx{\Sigma}) \vct{\eta}(\vct{z}, \mtx{\Sigma})^\intercal\right)\cdot \mtx{\Sigma}^{-1},\]
hence
\[\sum_{j \in \partial a} \widebar{A}_{aj}^2 \frac{\mathrm{d}\vct{\eta}}{\mathrm{d}\vct{z}} (\vct{z}_{j \to a}^{t-1},  \mtx{\Sigma}_{j \to a}^{t-1}) \cdot \mtx{\Sigma}_{j \to a}^{t-1} = \sum_{j \in \partial a} \widebar{A}_{aj}^2 \mtx{B}_{j \to a}^{t} = \mtx{V}_a^{t}.\]
We therefore end up with the following approximate message passing procedure: 
\begin{equation*}
\begin{cases}
\begin{aligned}
\vct{m}_{i}^{t+1} &= \vct{\eta}(\vct{z}_{i}^t, \mtx{\Sigma}_{i}^t),\\
\mtx{B}_{i}^{t+1} &= \Diag(\vct{\eta}(\vct{z}_{i}^t, \mtx{\Sigma}_{i}^t)) - \vct{\eta}(\vct{z}_{i}^t, \mtx{\Sigma}_{i}^t) \vct{\eta}(\vct{z}_{i}^t, \mtx{\Sigma}_{i}^t)^\intercal,\\
\mtx{\Sigma}_{i}^t ~~~&= \left(\sum_{b \in \partial i } \widebar{A}_{bi}^2 \left(\mtx{V}_{b}^t\right)^{-1}\right)^{-1},\\
\vct{z}_{i}^t ~~~&= \vct{m}_{i}^t + \mtx{\Sigma}_{i}^t \cdot \sum_{b \in \partial i} \widebar{A}_{bi} \left(\mtx{V}_{b}^t\right)^{-1} (\bar{\vct{h}}_b - \vct{\omega}^t_{b}), \\
\vct{\omega}_{a}^t  ~~~&= \sum_{j \in \partial a} \widebar{A}_{aj} \vct{m}_{j}^t - \mtx{V}_{a}^t\left(\mtx{V}_{a}^{t-1}\right)^{-1}(\bar{\vct{h}}_a - \vct{\omega}_a^{t-1}),\\
\mtx{V}_{a}^t ~~~&= \sum_{j \in \partial a} \widebar{A}_{aj}^2 \mtx{B}_{j}^t.
\end{aligned}
\end{cases}
\end{equation*}
This is rearranged to the AMP algorithm displayed in Section~\ref{subsec:AMP}, with the notation $\hat{\vct{x}}_i^t$ replacing $\vct{m}_i^t$.

\section{State Evolution equations} 
 \label{sxn:SE}
We derive the state evolution equations from the Relaxed Belief Propagation (RBP) equations~\eqref{relaxed_BP}. 
Let $\mtx{M}_t = \frac{1}{n}\sum_{i=1}^n \vct{m}_i^t\vct{x}_i^{*\intercal}$ and $\mtx{Q}_t = \frac{1}{n}\sum_{i=1}^n \vct{m}_i^t\vct{m}_i^{t\intercal}$. As we argued in the previous section, we can redefine $\mtx{M}_t$ and $\mtx{Q}_t$ by substituting $\vct{m}_{i}^t$ by $\vct{m}_{i \to a}^t$ at the cost of an asymptotically vanishing error. In this section, we drop the time indices to lighten the notation.
We expect the variance parameters $\mtx{V}_{a \to i}$ in RBP to be concentrated about a constant:
\[\E[\mtx{V}_{a \to i}] \simeq \sum_{j \neq i} \E[ \widebar{A}_{aj}^2] \mtx{B}_{j \to a}  = \frac{1}{n}\alpha(1-\alpha)\sum_{j \neq i} \mtx{B}_{j \to a} = \alpha(1-\alpha)\mtx{R}, \]
with $\mtx{R} := \frac{1}{n} \sum_{j } \mtx{B}_{j \to a}$.
A calculation of the second moment of $\mtx{V}_{a \to i}$ reveals that it is equal to the expectation of $\mtx{V}_{a \to i}$ plus a lower order term. Therefore we can safely assume that the quantities $\mtx{V}_{a \to i}$ are essentially constant and equal to $\alpha(1-\alpha) \mtx{R}$. 
Next, we deal with $\mtx{\Sigma}_{i \to a}$. By assuming approximate independence of $\widebar{A}_{bi}$ and $\mtx{V}_{b \to i}$, we get 
\[\E\left[\mtx{\Sigma}_{i \to a}^{-1}\right] = \sum_{b \neq a} \E\left[\widebar{A}_{bi}^2\right] \E\left[\mtx{V}_{b \to i}^{-1}\right]  =  \frac{1}{n} \alpha(1-\alpha) \sum_{b \neq a} \frac{\mtx{R}^{-1}}{\alpha(1-\alpha)}\simeq \kappa \mtx{R}^{-1}.\]
We then make the approximation $\mtx{\Sigma}_{i \to a}^{-1} \simeq \E[\mtx{\Sigma}_{i \to a}^{-1}]$, i.e.\ $\mtx{\Sigma}_{i \to a} \simeq \kappa^{-1}\mtx{R}$.
Next, we turn our attention to $\vct{z}_{i \to a}$:
\begin{align*}
\vct{z}_{i \to a} &= \mtx{\Sigma}_{i \to a} \cdot \sum_{b \neq a} \widebar{A}_{bi}\mtx{V}_{b \to i}^{-1} (\bar{\vct{h}}_b - {\bm \omega_{b \to i}})\\
&\simeq \frac{1}{\kappa \alpha(1-\alpha)} \sum_{b \neq a} \widebar{A}_{bi}(\bar{\vct{h}}_b - {\bm \omega_{b \to i}}). 
\end{align*}

\noindent Now using $\vct{\omega}_{b \to i}  = \sum_{j \neq i} \widebar{A}_{bj} \vct{m}_{j \to b}$ and $\bar{\vct{h}}_b = \sum_{ j =1}^n \widebar{A}_{bj}\vct{x}^*_j$, we get 
\[\vct{z}_{i \to a} \simeq \frac{1}{\kappa \alpha(1-\alpha)} \sum_{b \neq a} \widebar{A}_{bi}\left(\sum_{ j \neq i} \widebar{A}_{bj}(\vct{x}^*_j -\vct{m}_{j \to a}) + \widebar{A}_{bi}\vct{x}^*_i\right).\] 
The inner sum in the above expression involves $n$ weakly independent terms, so we expect a central limit theorem to hold. Therefore the only relevant quantities are the expectation and the variance of $\vct{z}$: $\E[\vct{z}_{i \to a}] = \vct{x}^*_i$, and 
\begin{align*}
\E[(\vct{z}_{i \to a}-\vct{x}^*_i)(\vct{z}_{i \to a}-\vct{x}^*_i)^\intercal] &= \frac{1}{(\kappa \alpha(1-\alpha))^2}\sum_{b \neq a} \sum_{j \neq i} \sum_{b' \neq a} \sum_{j' \neq i} \E[\widebar{A}_{bi}\widebar{A}_{b'i}]\E[\widebar{A}_{bj}\widebar{A}_{bj'}] \\
& \hspace{2in} \times (\vct{x}^*_j -\vct{m}_{j \to a})(\vct{x}^*_j -\vct{m}_{j \to a})^\intercal\\
 &= \frac{1}{(\kappa \alpha(1-\alpha))^2}\sum_{b \neq a} \sum_{j \neq i} \frac{(\alpha(1-\alpha))^2}{n^2} (\vct{x}^*_j -\vct{m}_{j \to a})(\vct{x}^*_j -\vct{m}_{j \to a})^\intercal\\
 &= \kappa^{-2} \frac{m}{n} \frac{1}{m}\sum_{b \neq a} \frac{1}{n} \sum_{j \neq i}  (\vct{x}^*_j -\vct{m}_{j \to a})(\vct{x}^*_j -\vct{m}_{j \to a})^\intercal\\
 &\simeq \kappa^{-1} (\mtx{D} - \mtx{M} -\mtx{M}^\intercal + \mtx{Q}),
\end{align*}
with $\mtx{D} = \frac{1}{n} \sum_{i=1}^n \vct{x}_i^* \vct{x}_i^{*\intercal} = \Diag(\vct{\pi})$. 
Hence, we define \[\mtx{X} := \kappa^{-1}(\mtx{D} - \mtx{M} -\mtx{M}^\intercal + \mtx{Q}).\]
Therefore we have made the assumption that $\vct{z}_{i \to a} \sim \normal(\vct{x}^*_i, \mtx{X})$. Next, we assume that the $\vct{z}_{i \to a}$ are ``independent enough" that a law of large numbers holds in limit $n \to \infty$, $m/n \to \kappa$: 
 \[\frac{1}{n} \sum_{i: \vct{x}^*_i = \vct{e}_r} \vct{m}_{i \to a} = \frac{1}{n} \sum_{i:\vct{x}^*_i = \vct{e}_r} \vct{\eta}(\vct{z}_{i \to a},\mtx{\Sigma}_{i \to a}) \simeq \pi_r \E_{\vct{g}}\left[\vct{\eta}(\vct{e}_r + \mtx{X}^\half\vct{g}, \kappa^{-1}\mtx{R})\right],\]
and
\[\frac{1}{n} \sum_{i: \vct{x}^*_i = \vct{e}_r} \vct{m}_{i \to a} \vct{m}_{i \to a}^\intercal \simeq \pi_r \E_{\vct{g}}\left[\vct{\eta}(\vct{e}_r + \mtx{X}^\half\vct{g}, \kappa^{-1}\mtx{R}) \cdot \vct{\eta}(\vct{e}_r + \mtx{X}^\half\vct{g}, \kappa^{-1}\mtx{R})^\intercal \right],\]
for all $r \in \{1,\cdots,d\}$, with $\vct{g} \sim \normal(\mtx{0},\mtx{I})$. Plugging the above into $\mtx{M}$ and $\mtx{Q}$ yields 
\begin{align*}
\mtx{M} &= \frac{1}{n} \sum_{i=1}^n \vct{\eta}(\vct{x}^*_i + \mtx{X}^\half\vct{g}, \kappa^{-1}\mtx{R}) \vct{x}_i^{*\intercal}, \\
&\simeq  \sum_{r=1}^d \pi_r\E_{\vct{g}}\left[\vct{\eta}(\vct{e}_r + \mtx{X}^\half\vct{g}, \kappa^{-1}\mtx{R})\right] \vct{e}_r^\intercal,\\
\mtx{Q} &=  \frac{1}{n} \sum_{i=1}^n \vct{\eta}(\vct{x}^*_i + \mtx{X}^\half\vct{g}, \kappa^{-1}\mtx{R}) \cdot \vct{\eta}(\vct{x}^*_i + \mtx{X}^\half\vct{g}, \kappa^{-1}\mtx{R})^\intercal,\\
&\simeq   \sum_{r=1}^d \pi_r \E_{\vct{g}}\left[\vct{\eta}(\vct{e}_r + \mtx{X}^\half\vct{g}, \kappa^{-1}\mtx{R}) \cdot \vct{\eta}(\vct{e}_r + \mtx{X}^\half\vct{g}, \kappa^{-1}\mtx{R})^\intercal \right].
\end{align*}

Finally, it remains to find an expression for $\mtx{R}$. Recall $\mtx{B}_{i \to a} =  \Diag(\vct{m}_{i \to a}) - \vct{m}_{i \to a} \vct{m}_{i \to a}^\intercal$.
Averaging over $i$ and using the assumed concentration of the messages $\vct{m}_{i \to a}$ yields 
\begin{align*}
\mtx{R} = \frac{1}{n} \sum_{i=1}^n \mtx{B}_{ i\to a} &\simeq \Diag\left(\sum_{r=1}^d \pi_r\E_{\vct{g}}\left[\vct{\eta}(\vct{e}_r + \mtx{X}^{\half}\vct{g}, \kappa^{-1}\mtx{R})\right]\right) - \mtx{Q},\\
&= \Diag(\mtx{Q}\one) - \mtx{Q}.
\end{align*}
To sum up, we get a system of self-consistent equations in $\mtx{M}_t$, $\mtx{Q}_t$, $\mtx{X}_t$ and $\mtx{R}_t$:
\begin{equation*}
 \begin{cases}
 \begin{aligned}
\mtx{M}_{t+1} &= \sum_{r=1}^d \pi_r\E_{\vct{g}}\left[\vct{\eta}(\vct{e}_r + \mtx{X}_{t}^{\half}\vct{g}, \kappa^{-1}\mtx{R}_{t})\right] \cdot \vct{e}_r^\intercal, \\
\mtx{Q}_{t+1} &= \sum_{r=1}^d \pi_r\E_{\vct{g}}\left[\vct{\eta}(\vct{e}_r + \mtx{X}_{t}^{\half}\vct{g}, \kappa^{-1}\mtx{R}_{t}) \cdot \vct{\eta}(\vct{e}_r + \mtx{X}_{t}^{\half}\vct{g}, \kappa^{-1}\mtx{R}_{t})^\intercal \right],\\
\mtx{X}_t ~~~&= \kappa^{-1}(\mtx{D} - \mtx{M}_t -\mtx{M}_{t}^{\intercal} + \mtx{Q}_t),\\
\mtx{R}_t ~~~&= \Diag(\mtx{Q}_t\one) - \mtx{Q}_t.
\end{aligned}
\end{cases}
\end{equation*}
This set of equations constitute the State Evolution equations.

\end{document}